\newtheorem{thm}{Theorem}[section]
\newtheorem{lma}[thm]{Lemma}
\newtheorem{cor}[thm]{Corollary}
\newtheorem{claim}[thm]{Claim}
\newtheorem{fact}[thm]{Fact}
\DeclareMathOperator{\vol}{vol}
\newcommand{\cA}{\mathcal{A}}
\renewcommand{\AA}{\mathcal{A}}
\newcommand{\BB}{\mathcal{B}}
\newcommand{\eps}{\varepsilon}
\newcommand{\ds}{{\rm degsum}}
\newcommand{\tw}{{\rm tw}}
\newcommand{\pw}{{\rm pw}}
\newcommand{\fvs}{{\rm fvs}}
\newcommand{\q}{q^*}
\newcommand{\Fc}{  }
\begin{document}

\title{The parameterised complexity of computing the maximum modularity of a graph\thanks{An extended abstract of this appeared in proc. IPEC 2018.}}

\author{
   Kitty Meeks\thanks{KM is supported by a Royal Society of Edinburgh Personal Research Fellowship, funded by the Scottish Government.}\\
   School of Computing Science\\
   University of Glasgow\\
   Glasgow, UK\\
   \texttt{kitty.meeks@glasgow.ac.uk}
   \and Fiona Skerman\thanks{FS conducted part of this research while at Uppsala University.}\\
   Heilbronn Institute for Mathematical Research\\
   Department of Mathematics\\
   University of Bristol\\
   Bristol, UK\\
   \texttt{f.skerman@bristol.ac.uk}
}

\date{October 2019}

\maketitle

\begin{abstract}
The \emph{maximum modularity} of a graph is a parameter widely used to describe the level of clustering or community structure in a network.  Determining the maximum modularity of a graph is known to be $\NP$-complete in general, and in practice a range of heuristics are used to construct partitions of the vertex-set which give lower bounds on the maximum modularity but without any guarantee on how close these bounds are to the true maximum.  In this paper we investigate the parameterised complexity of determining the maximum modularity with respect to various standard structural parameterisations of the input graph~$G$.  We show that the problem belongs to $\FPT$ when parameterised by the size of a minimum vertex cover for~$G$, and is solvable in polynomial time whenever the treewidth or max leaf number of~$G$ is bounded by some fixed constant; we also obtain an FPT algorithm, parameterised by treewidth, to compute any constant-factor approximation to the maximum modularity.  On the other hand we show that the problem is W[1]-hard (and hence unlikely to admit an FPT algorithm) when parameterised simultaneously by pathwidth and the size of a minimum feedback vertex set.
\end{abstract}

\section{Introduction}

The increasing availability of large network datasets has led to great interest in techniques to discover network structure. An important and frequently observed structure in networks is the existence of groups of vertices with many connections between them, often referred to as `communities'.

Newman and Girvan introduced the modularity function in 2004~\cite{NewmanGirvan}. Modularity gives a measure of how well a graph can be partitioned into communities and is used in the most popular algorithms to cluster large networks.  For example, the Louvain method, an iterative clustering technique, uses the modularity function to choose which parts from the previous step to fuse into larger parts at each step~\cite{jeubcode,louvain}. The widespread use of modularity and empirical success in finding communities makes modularity an important function to study from an algorithmic point of view.

In this paper we are concerned with the computational complexity of computing the maximum modularity of a given input graph, and specifically in the following decision problem.

\begin{framed}
\noindent \textsc{Modularity}\\
\textit{Input:} A graph~$G$ and a constant $q \in [0,1]$.\\
\textit{Question:} Is the maximum modularity of~$G$ at least $q$?
\end{framed}

This problem was shown to be $\NP$-complete in general by Brandes~et.~al.~\cite{brandes08}, using a construction that relies on the fact that all vertices of a sufficiently large clique must be assigned to the same part of an optimal partition.  They also showed that a variation of the problem in which we wish to find the optimal partition into exactly two sets is hard; their proof for this relied again on the use of large cliques, but DasGupta and Desai \cite{dasgupta13} later showed that this 2-clustering problem remains $\NP$-complete on $d$-regular graphs for any fixed~$d \geq 9$.  It has also been shown that it is $\NP$-hard to approximate the maximum modularity within any constant factor \cite{dinh15}, although there is a polynomial-time constant-factor approximation algorithm for certain families of scale-free networks \cite{dinh13}.  The hardness of computing constant-factor multiplicative approximations in general has motivated research into approximation algorithms with an additive error \cite{dinh15,kawase16}: the best known result is an approximation algorithm with additive error roughly 0.42084 \cite{kawase16}.

In this paper we initiate the study of the parameterised complexity of \textsc{Modularity}, considering its complexity with respect to several standard structural parameterisations.  On the positive side, we show that the problem is in $\FPT$ when parameterised by the cardinality of a minimum vertex cover for the input graph~$G$, and that it belongs to $\XP$ when parameterised by either the treewidth or max leaf number of~$G$.  The XP algorithm parameterised by treewidth can easily be adapted to give an FPT algorithm, parameterised by treewidth, to compute any constant-factor approximation maximum modularity.  On the other hand, we demonstrate that \textsc{Modularity}, parameterised by treewidth, is unlikely to belong to $\FPT$: we prove that the problem is $\W[1]$-hard even when parameterised simultaneously by the pathwidth of~$G$ and the size of a minimum feedback vertex set for~$G$.  For background on parameterised complexity, and the complexity classes discussed here, we refer the reader to \cite{paramalgs,downeyfellows13}.

These results follow the same pattern as those obtained for the problem \textsc{Equitable Connected Partition} \cite{enciso09}, and indeed our hardness result involves a reduction from a specialisation of this problem.  There are clear similarities between the two problems: in a partition that maximises the modularity, every part will induce a connected subgraph and, in certain circumstances, we achieve the maximum modularity with a partition into parts that are as equal as possible.  However, the crucial difference between the two problems is that the input to \textsc{Equitable Connected Partition} includes the required number of parts, whereas \textsc{Modularity} requires us to maximise over all possible partition sizes; in fact, if we restrict to partitions with a specified parts, it is no longer necessarily true that a partition maximising the modularity must induce connected subgraphs.  This difference makes reductions between the two problems non-trivial.

\subsection{The modularity function}
\label{sec:mod-fn}
The definition of modularity was first introduced by Newman and Girvan in~\cite{NewmanGirvan}. Many or indeed most popular algorithms used to search for clusterings on large datasets are based on finding partitions with high modularity~\cite{popular,fortunato2016community}, and the heuristics within them sometimes also use local modularity optimisation, for example in the Louvain method~\cite{louvain}. See~\cite{fortunato2010community,porter2009communities} for surveys on community detection including modularity based methods.

Knowledge on the maximum modularity for classes of graphs helps to understand the behaviour of the modularity function. There is a growing literature on this which began with cycles and complete graphs in~\cite{brandes08}. Bagrow~\cite{bagrow} and Montgolfier et al.~\cite{modgraphclasses} showed some classes of trees have high maximum modularity which was extended in~\cite{treelike} to all trees with maximum degree~$o(n)$, and furthermore to all graphs where the product of treewidth and maximum degree grows more slowly than the number of edges. Many random graph models also have high modularity, see~\cite{modERAofA,ERus} for a treatment of Erd\H{o}s-Renyi random graphs, \cite{treelike}~for random regular graphs and also~\cite{pralat} which includes the preferential attachment model.

Given a set~$A$ of vertices, let $e(A)$ denote the number of edges within $A$, and let $\vol(A)$ (sometimes called the volume of $A$) denote the sum of the degree $d_v$ (in the whole graph~$G$) over the vertices $v$ in $A$. For a graph~$G$ with $m\geq 1$ edges and a vertex partition $\cA$ of~$G$, 
set the modularity score of $\cA$ on~$G$ to be 
\[ q_\cA(G) = 
\frac{1}{2m}\sum_{\A\in \cA} \sum_{u,v \in A} 
\left( {\mathbf 1}_{uv\in E} - \frac{d_u d_v}{2m} \right)
= \frac{1}{m}\sum_{\A \in \cA} e(\A) - \frac{1}{4m^2}\sum_{\A\in \cA} \vol (\A)^2; \]
the maximum modularity of~$G$ is $\q(G)=\max_\cA(G)$, where the maximum is over all partitions $\cA$ of the vertices of~$G$. Graphs with no edges are defined conventionally to have modularity 1. However note that if the modularity of graphs with no edges were defined to be 0 it would not change any of the results.

The modularity function is designed to score partitions highly when most edges fall within the parts and penalise partitions with very few or very big parts. These two objectives are encoded as the \emph{edge contribution} or \emph{coverage}  $q^E_\cA(G)=\frac{1}{m}\sum_{\A \in \cA} e(\A)$, and \emph{degree tax} $q_\cA^D(G)=\frac{1}{4m^2}\sum_{\A\in \cA} \vol (\A)^2$,  in the modularity of a vertex partition~$\cA$ of~$G$.

Note that for any graph with $m\geq 1$ edges $0 \leq q^*(G) \leq 1$.  To see the lower bound, notice that the trivial partition which places all vertices in the same part has modularity zero. For example, complete graphs and stars have modularity~0 as noted in
\cite{brandes08}.  A graph consisting of $c$ disjoint cliques of the same size has modularity $1-1/c$ with the optimal partition taking each clique to be a part.

As modularity is at most~1 it is sometimes useful to consider the \emph{modularity {\Fc deficit}} $\tilde{q}_\cA(G)=1-q_\cA(G)$. Denote by $\partial(A)$ the number of edges between vertex set $A$ and the rest of the graph. Then 
\[\tilde{q}_\cA(G) =\frac{1}{2m}\sum_{A\in \cA} \bigg(\partial(A)+\frac{\vol(A)^2}{2m}\bigg) \] and we may equivalently minimise the modularity {\Fc deficit} to maximise the modularity. In particular \[\tilde{q}(G)= \min_{A\in \cA} \tilde{q}_\cA(G) = 1-\q(G).\]

We will make use of several facts about the maximum modularity of a graph.

\begin{fact}[Lemma~1 of \cite{dinh2011finding}, Lemma~2.1 of \cite{dasgupta13}]\label{fact:approxbycALT}
For any integer $c>0$ and any graph~$G$,
$$\max_{|\AA|\leq c} q_\AA(G)> \q(G)\Big(1-\frac{1}{c}\Big).$$
\end{fact}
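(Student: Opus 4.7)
The plan is to prove the claim by applying a randomised coarsening to an optimal partition. Let $\AA^* = \{A_1, \ldots, A_k\}$ satisfy $q_{\AA^*}(G) = \q(G)$; the interesting case is $k > c$, since otherwise $\AA^*$ itself has at most $c$ parts and already attains $\q(G)$. Form a random partition $\AA'$ of $V(G)$ by assigning each block $A_i$ independently and uniformly at random to one of $c$ buckets and then discarding any empty buckets. The result is a partition with at most $c$ non-empty parts.

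The main step is to compute $E[q_{\AA'}(G)]$ by linearity of expectation. Two vertices lying in the same part of $\AA^*$ always lie in the same part of $\AA'$, whereas two vertices in distinct parts of $\AA^*$ coincide in $\AA'$ with probability exactly $1/c$. Writing
$$q^E_\AA(G) = \frac{1}{m}\sum_{uv \in E}\mathbf{1}[u\sim_\AA v], \qquad q^D_\AA(G) = \frac{1}{4m^2}\sum_{u,v\in V} d_u d_v\,\mathbf{1}[u\sim_\AA v],$$
where $u \sim_\AA v$ abbreviates ``$u$ and $v$ lie in the same block of $\AA$'', and applying the two probabilities above pair by pair yields
$$E[q^E_{\AA'}(G)] = \bigl(1 - \tfrac{1}{c}\bigr)\,q^E_{\AA^*}(G) + \tfrac{1}{c}, \qquad E[q^D_{\AA'}(G)] = \bigl(1 - \tfrac{1}{c}\bigr)\,q^D_{\AA^*}(G) + \tfrac{1}{c};$$
the additive $1/c$ term in the degree-tax identity uses $\sum_{u,v\in V} d_u d_v = (2m)^2$, so that the total degree-tax mass $\tfrac{1}{4m^2}\sum_{u,v}d_ud_v$ equals $1$. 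Subtracting, the $1/c$ terms cancel and $E[q_{\AA'}(G)] = (1 - 1/c)\,\q(G)$.

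Since some realisation of $\AA'$ must achieve at least the expected modularity, and that realisation has at most $c$ non-empty parts, we obtain $\max_{|\AA|\leq c} q_\AA(G) \geq (1 - 1/c)\q(G)$. Strict inequality then follows in the relevant range $\q(G) > 0$: either $k \leq c$, giving $\q(G) > (1-1/c)\q(G)$ directly, or $k > c$ and one checks that $q_{\AA'}(G)$ is not almost-surely constant, so some outcome strictly exceeds its mean. I anticipate the only mildly delicate step is the cross-pair computation in the degree tax, which reduces to the identity $\sum_i \vol(A_i)^2 + \sum_{i\neq j}\vol(A_i)\vol(A_j) = (2m)^2$; beyond this the argument is routine.
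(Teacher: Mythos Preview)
The paper does not supply its own proof of this fact: it is stated with citations to \cite{dinh2011finding} and \cite{dasgupta13} and then used as a black box. So there is no in-paper argument to compare against. Your randomised-coarsening approach is precisely the standard proof that appears in those references, and your computation of $E[q_{\AA'}(G)] = (1-1/c)\,\q(G)$ is correct; the key identity $\sum_{u,v} d_u d_v = (2m)^2$ is exactly what makes the $1/c$ terms cancel.

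One remark on the strict inequality: your upgrade from $\geq$ to $>$ is fine when $\q(G)>0$ (the all-in-one-bucket outcome has modularity $0$, so the random variable is non-constant and some outcome beats the mean), but the statement as written is actually false at the boundary. For $c=1$ the right-hand side is $0$ and the only admissible partition is the trivial one, giving $0 > 0$; similarly, for any $c$, a graph with $\q(G)=0$ (e.g.\ a complete graph) has $\max_{|\AA|\leq c} q_\AA(G)=0$. You correctly flag the restriction to $\q(G)>0$, but you should also note $c\geq 2$. These are defects in the statement rather than in your argument, and they are irrelevant to the paper's application of the fact, which only needs the weak inequality.
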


\begin{fact}[Lemma~3.4 of~\cite{brandes08}]\label{fact:connected}
Suppose that~$G$ is a graph that contains no isolated vertices.  If $\mathcal{A}$ is a partition of $V(G)$ such that $q_{\mathcal{A}}(G) = q^*(G)$ then, for every $A \in \mathcal{A}$, $G[A]$ is a connected subgraph of~$G$.
\end{fact}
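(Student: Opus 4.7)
The plan is to argue by contradiction. Suppose $\mathcal{A}$ attains the maximum modularity on $G$ yet some part $A \in \mathcal{A}$ has $G[A]$ disconnected. I will construct a refinement $\mathcal{A}'$ of $\mathcal{A}$ whose modularity score is strictly larger, giving the required contradiction.

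Since $G[A]$ has at least two connected components, I can write $A = A_1 \sqcup A_2$ with $A_1, A_2$ both non-empty and with no edge of $G$ running between them (take $A_1$ to be any connected component of $G[A]$ and $A_2 := A \setminus A_1$). Define the refinement $\mathcal{A}' := (\mathcal{A} \setminus \{A\}) \cup \{A_1, A_2\}$, which agrees with $\mathcal{A}$ on every other part.

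To compare $q_{\mathcal{A}'}(G)$ with $q_\mathcal{A}(G)$ I use the decomposition $q_\mathcal{A}(G) = q^E_\mathcal{A}(G) - q^D_\mathcal{A}(G)$ given in Section~\ref{sec:mod-fn}. Because no edge of $G$ crosses between $A_1$ and $A_2$, we have $e(A) = e(A_1) + e(A_2)$, so the edge contribution is unchanged: $q^E_{\mathcal{A}'}(G) = q^E_\mathcal{A}(G)$. For the degree tax, $\vol(A) = \vol(A_1) + \vol(A_2)$ gives
$$\vol(A)^2 - \vol(A_1)^2 - \vol(A_2)^2 = 2\,\vol(A_1)\,\vol(A_2),$$
so passing from $\mathcal{A}$ to $\mathcal{A}'$ decreases the degree tax by exactly $\vol(A_1)\vol(A_2)/(2m^2)$. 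Here the hypothesis that $G$ has no isolated vertex is exactly what is needed: it guarantees $d_v \geq 1$ for every $v$, hence $\vol(A_1), \vol(A_2) \geq 1$, and the decrease is strictly positive. Therefore $q_{\mathcal{A}'}(G) > q_\mathcal{A}(G)$, contradicting the optimality of $\mathcal{A}$.

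I do not anticipate any serious obstacle: the whole argument is a short two-line computation once one observes that splitting at a component boundary costs nothing in edge contribution but strictly reduces the degree tax whenever both sides carry positive volume. The only subtle point is the role of the isolated-vertex hypothesis, which is needed solely to ensure $\vol(A_1)\vol(A_2) > 0$; without it, an isolated vertex could be separated off at zero gain and the inequality would only be weak.
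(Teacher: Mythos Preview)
Your argument is correct and is the standard proof of this fact. Note that the paper itself does not supply a proof of this statement: it is quoted as Lemma~3.4 of~\cite{brandes08} and used as a black box, so there is no in-paper proof to compare against. Your reasoning is exactly the argument given in~\cite{brandes08}: split a disconnected part along a component boundary, observe the edge contribution is unchanged while the degree tax drops by $2\vol(A_1)\vol(A_2)/(4m^2)>0$, and conclude.
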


\begin{fact}[Corollary~1 of~\cite{brandes08}]\label{fact:isolated}
Let $G = (V,E)$ and suppose that $V_0 \subseteq V$ is a set of isolated vertices.  Then $q(G) = q(G \setminus V_0)$.  Moreover, if partitions $\cA$ and $\cA'$ agree on all vertices of $V \setminus V_0$, then $q_{\cA}(G) = q_{\cA'}(G)$.
\end{fact}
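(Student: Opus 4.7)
The plan is to exploit the fact that an isolated vertex $v$ has degree $d_v=0$ and is incident to no edges, so it contributes trivially to both terms of the modularity formula
\[
q_\cA(G) = \frac{1}{m}\sum_{A \in \cA} e(A) - \frac{1}{4m^2}\sum_{A\in \cA} \vol(A)^2.
\]
Removing $v$ from its part $A$ (or moving it into some other part $A'$) does not alter $e(A)$ or $e(A')$, because $v$ is in no edges, and it does not alter $\vol(A)$ or $\vol(A')$, because $v$ has degree $0$. Also, the number of edges $m$ is the same in $G$ and $G\setminus V_0$.

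With this observation, the ``moreover'' part is immediate: if $\cA$ and $\cA'$ agree on $V\setminus V_0$, then $\cA'$ can be obtained from $\cA$ by a sequence of moves of isolated vertices from one part to another (possibly splitting off singleton parts or merging singleton parts into existing parts), each of which leaves every $e(A)$ and $\vol(A)$ unchanged, hence leaves both $q^E_\cA(G)$ and $q^D_\cA(G)$ unchanged.

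For the first part $q(G)=q(G\setminus V_0)$, I would exhibit a modularity-preserving correspondence between partitions of $V$ and partitions of $V\setminus V_0$. Given a partition $\cA$ of $V$, let $\cA^*$ be its restriction to $V\setminus V_0$ (discarding any parts that become empty). Since the degree of every vertex in $V\setminus V_0$ is the same in $G$ and in $G\setminus V_0$, and since empty parts and parts consisting only of isolated vertices contribute $0$ to both $\sum e(A)$ and $\sum \vol(A)^2$, one gets $q_\cA(G)=q_{\cA^*}(G\setminus V_0)$. Conversely, any partition $\cB$ of $V\setminus V_0$ extends (e.g.\ by placing each vertex of $V_0$ in its own singleton part) to a partition $\cB'$ of $V$ with $q_{\cB'}(G)=q_\cB(G\setminus V_0)$. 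Taking the maximum over all partitions in each case yields $q(G)=q(G\setminus V_0)$; the edge case $m=0$ is handled by the convention that edgeless graphs have modularity~$1$, which is consistent since $G\setminus V_0$ is also edgeless in that case.

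The only real care needed is the bookkeeping around empty parts when restricting $\cA$ to $V\setminus V_0$, and the degenerate case where $V\setminus V_0=\emptyset$; neither presents a genuine obstacle, so I expect this to be a short, essentially computational proof rather than one with a difficult step.
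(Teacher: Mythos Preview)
Your argument is correct. Note, however, that the paper does not actually supply its own proof of this fact: it is simply quoted as Corollary~1 of~\cite{brandes08}, so there is nothing in the present paper to compare against. Your observation that isolated vertices contribute zero to both $e(A)$ and $\vol(A)$, together with the invariance of $m$, is exactly the content of the result, and the bookkeeping you describe (restricting/extending partitions, handling empty parts and the $m=0$ convention) is the right way to make it precise.
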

{\Fc
\begin{fact}[Lemma~1.6.5 of~\cite{thesis}]\label{fact:singlevertex}  
If $\mathcal{A}$ is a partition of $V(G)$ such that $q_{\mathcal{A}}(G) = q^*(G)$ then no part $A$ consists of a single non-isolated vertex.
\end{fact}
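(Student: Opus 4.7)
The plan is to prove the fact by contradiction: suppose $\mathcal{A}$ is an optimal partition with $q_{\mathcal{A}}(G)=q^*(G)$, yet some part $A=\{v\}$ consists of a single non-isolated vertex (so $d_v\geq 1$). The idea is to show that merging $\{v\}$ with an appropriately chosen neighbouring part strictly increases the modularity, contradicting optimality.

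Let $B_1,\ldots,B_k$ be those parts of $\mathcal{A}\setminus\{A\}$ that contain at least one neighbour of $v$, and write $e_i$ for the number of edges between $v$ and $B_i$ and $V_i=\vol(B_i)$. First I would observe that $\sum_i e_i=d_v$ (every edge at $v$ leaves $\{v\}$, and every such edge enters some $B_i$), and that $\sum_i V_i\leq 2m-d_v$, since these volumes are disjoint from $\vol(\{v\})=d_v$.

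Next, let $\mathcal{A}_i$ be the partition obtained from $\mathcal{A}$ by replacing the two parts $\{v\}$ and $B_i$ by the single part $\{v\}\cup B_i$. A direct computation using the formula $q_{\mathcal{A}}(G)=\frac{1}{m}\sum_{A}e(A)-\frac{1}{4m^2}\sum_{A}\vol(A)^2$ shows that $e(\{v\}\cup B_i)-e(\{v\})-e(B_i)=e_i$ and $\vol(\{v\}\cup B_i)^2-\vol(\{v\})^2-\vol(B_i)^2=2d_vV_i$, so
\[
q_{\mathcal{A}_i}(G)-q_{\mathcal{A}}(G)=\frac{e_i}{m}-\frac{d_vV_i}{2m^2}.
\]

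The main (though not very deep) step is to exhibit some $i$ for which this quantity is strictly positive. I would do this by an averaging argument: if instead $2me_i\leq d_vV_i$ held for \emph{every} $i$, then summing over $i$ would give $2md_v=2m\sum_i e_i\leq d_v\sum_i V_i\leq d_v(2m-d_v)$, i.e.\ $d_v^2\leq 0$, contradicting $d_v\geq 1$. Hence there is some index $i$ with $q_{\mathcal{A}_i}(G)>q_{\mathcal{A}}(G)$, contradicting the optimality of $\mathcal{A}$. The only subtlety worth double-checking is the signs and volumes in the modularity change computation; beyond that, the argument is a short averaging/pigeonhole over the neighbouring parts.
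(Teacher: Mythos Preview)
Your proof is correct and follows essentially the same approach as the paper's: both compute the modularity change from merging $\{v\}$ into another part and then derive a contradiction by summing the resulting inequalities over parts to obtain $2md_v \le d_v(2m-d_v)$. The only cosmetic difference is that you sum over the neighbouring parts (obtaining $\sum_i V_i \le 2m-d_v$) while the paper sums over all other parts (obtaining equality), but this makes no substantive difference.
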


%\begin{fact}[Lemma~3.3 of~\cite{brandes08}]\label{fact:pendant}  
%If $\mathcal{A}$ is a partition of $V(G)$ such that $q_{\mathcal{A}}(G) = q^*(G)$ then no part $A$ consists of a single vertex of degree~1.
%\end{fact}

%%%%%%%%%%%%%%%%%%%%%%%%%%%%%%%%%%%
%The last result of Brandes et al.\ which we will survey says that any node with degree~1 will not be alone in any optimal partition of our graph (Lemma~3.3 of~\cite{nphard}). We extend this statement to any node of non-zero degree. \\

%\begin{lma}[Brandes et al.~\cite{nphard}]\label{lem.nodegone} Let~$G$ be a graph and $\AA$ an optimal vertex partition of~$G$. Then $A=\{u\}$ for some $A\in \AA$ implies deg$(u)\neq1$. \end{lma}

%A bit more is true. If a single vertex is a part in an optimal partition of~$G$ then it must have degree of~0.\\

%\begin{lma}\label{lem.noPartOne} Let~$G$ be a graph and $\AA$ an optimal vertex partition of~$G$. Then $A=\{u\}$ for some $A \in \AA$ implies deg$(u)=0$.\end{lma}
%%%%%%%%%%%%%%%%%%%%%%%%%%%%%%%%%%%

\begin{proof}
Let $u$ be a vertex with degree $d_u>0$ and suppose (for a contradiction) that $\cA=\{\{u\},A_1, \ldots, A_k\}$ is an optimal partition of~$G$. For each $i=1,\ldots, k$ %write $d_\ell=e(\{u\}, A_\ell)$ and $w_\ell=\ds(A_\ell)$ and 
define the vertex partition $\BB_i=\{A_1, \ldots, A_i\cup\{u\}, \ldots, A_k\}$. We can derive a simple expression for $q_{\BB_i}(G)-q_{\AA}(G)$ as most terms cancel 
\begin{equation*}
q_{\BB_i}(G)-q_{\AA}(G)= \frac{1}{m}e(\{u\},A_i) - \frac{1}{2m^2}d_u\ds(A_i)%=\frac{d_i}{m}-\frac{dw_i}{2m^2}.
\end{equation*}
By assumption, $\AA$ is an optimal partition so $q_{\BB_i}(G)\leq q_\AA(G)$ and thus for each~$i$ we have $2m  \cdot e(\{u\},A_i) \leq d_u \ds(A_i)$. Hence we can sum over $i=1, \ldots, k$ and the inequality should hold. However for the LHS $2m \sum_i e(\{u\},A_i) =2m d_u$ and the RHS is
$$d_u \sum_{i=1}^k  \ds(A_i) = d_u(2m-d_u ) < 2md_u $$ and so we have our contradiction.\end{proof}
}

{\Fc
Observe that Facts \ref{fact:connected}, \ref{fact:isolated} and \ref{fact:singlevertex} together imply that the search for an optimal partition can be restricted to those in which all parts are connected subgraphs and no part consists of a single node.\\
}

\subsection{Notation and definitions}
\label{sec:notation}

Given a graph $G = (V,E)$, and a set $U \subseteq V$ of vertices, we write $G[U]$ for the subgraph of~$G$ induced by $U$ and $G \setminus U$ for $G[V \setminus U]$.  Given two disjoint subsets of vertices $A,B \subseteq V$, we write $e(A,B)$ for the number of edges with one endpoint in $A$ and the other in $B$.  We shall often want to denote the number of edges between a set of vertices and the remainder of the graph so set $\partial(A)=e(A,\bar{A})$.  If $\mathcal{P}$ is a partition of a set $X$, and $Y \subset X$, we write $\mathcal{P}[Y]$ for the restriction of $\mathcal{P}$ to $Y$.

A \emph{vertex cover} of a graph $G = (V,E)$ is a set $U \subseteq V$ such that every edge has at least one endpoint in $U$; equivalently, $G \setminus U$ is an independent set (i.e.\ contains no edges).  The \emph{vertex cover number} of~$G$ is the smallest cardinality of any vertex cover of~$G$.  A \emph{feedback vertex set} for~$G$ is a set $U \subseteq V$ such that $G \setminus U$ contains no cycles.  Notice that the vertex cover number of~$G$ gives an upper bound on the size of the smallest feedback vertex set for~$G$, written $\fvs(G)$.  The \emph{max leaf number} of~$G$ is the maximum number of leaves (degree one vertices) in any spanning tree of~$G$.  

A \emph{tree decomposition} of a graph~$G$ is a pair $(T,\mathcal{D})$ where $T$ is a tree and $\mathcal{D} = \{\mathcal{D}(t): t \in V(T)\}$ is a collection of non-empty subsets of $V(G)$ (or \emph{bags}), indexed by the nodes of $T$, satisfying:
\begin{enumerate}
\item $V(G) = \bigcup_{t \in V(T)} \mathcal{D}(t)$,
\item for every $e=uv \in E(G)$, there exists $t \in V(T)$ such that $u,v \in \mathcal{D}(t)$,
\item for every $v \in V(G)$, if $T(v)$ is defined to be the subgraph of $T$ induced by nodes $t$ with $v \in \mathcal{D}(t)$, then $T(v)$ is connected.
\end{enumerate}
We will assume throughout that the indexing tree $T$ has a distinguished root node $r$; if not we may choose an arbitrary node to be the root.  Given any node $t \in V(T)$ we write $V_t$ for the set of vertices of~$G$ that appear in bags indexed by $t$ and the descendants of $t$.

If $T$ is in fact a path, we say that $(T,\mathcal{D})$ is a \emph{path decomposition} of~$G$.  The \emph{width} of the tree decomposition $(T,\mathcal{D})$ is defined to be $\max_{t \in V(T)} |\mathcal{D}(t)| - 1$, and the \emph{treewidth} of~$G$, written $\tw(G)$, is the minimum width over all tree decompositions of~$G$.  The \emph{pathwidth} of~$G$, $\pw(G)$, is the minimum width over all path decompositions of~$G$.

We note that there is an FPT algorithm to compute a minimum-width tree decomposition of any graph~$G$, where the treewidth of~$G$ is taken as the parameter \cite{bodlaender93}.  Moreover, any such tree decomposition can be transformed into a so-called \emph{nice} tree decomposition (having certain algorithmically useful properties) in linear time, without increasing the number of nodes by more than a constant factor \cite{kloks94}.

\section{Positive results}
\label{sec:positive}

In this section we identify a number of structural restrictions on the input graph that allow us to compute the maximum modularity of a graph, or a good approximation to this quantity, efficiently.

\subsection{Parameterisation by vertex cover number}
\label{sec:vc}

In this section we demonstrate that \textsc{Modularity} is in $\FPT$ when parameterised by the vertex cover number of the input graph.

\begin{thm}\label{thm:vc}
\textsc{Modularity}, parameterised by cardinality of a minimum vertex cover for the input graph~$G$, is in $\FPT$.
\end{thm}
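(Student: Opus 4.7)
My plan is to exploit the fact that a minimum vertex cover $U$ of size $k$ leaves an independent set $I=V(G)\setminus U$ in which two vertices sharing the same neighbourhood in $U$ are interchangeable: swapping such a pair preserves every $e(A)$ and $\vol(A)$, hence preserves $q_\cA(G)$. I would first compute $U$ in FPT time using a standard branching algorithm for \textsc{Vertex Cover}, and then classify each vertex of $I$ by its neighbourhood type $N(v)\subseteq U$; there are at most $2^k$ such types.

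Next I would invoke Facts~\ref{fact:connected}, \ref{fact:isolated} and~\ref{fact:singlevertex} to restrict the shape of an optimal partition. Isolated vertices may be handled separately (Fact~\ref{fact:isolated}). For any non-isolated $v\in I$, its optimal part must induce a connected subgraph (Fact~\ref{fact:connected}) and must contain more than just $\{v\}$ (Fact~\ref{fact:singlevertex}); since $N(v)\subseteq U$, the part is therefore forced to meet $U$. Hence every optimal partition arises (up to placement of isolated vertices) from a choice of partition $\mathcal{P}=\{P_1,\ldots,P_\ell\}$ of $U$ with $\ell\leq k$, together with an assignment of each non-isolated vertex of $I$ to some $P_j$. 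I would enumerate the at most $B_k$ partitions of $U$ by brute force.

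For each fixed $\mathcal{P}$, interchangeability reduces the assignment problem to choosing non-negative integers $x_{ij}$ with $\sum_j x_{ij}=n_i$, where $n_i$ is the number of non-isolated type-$i$ vertices. Writing $a_{ij}=|N(v)\cap P_j|$ and $d_i=|N(v)|$ for any type-$i$ vertex $v$, the resulting modularity equals
\[ \frac{1}{m}\Big(\sum_j e_U(P_j)+\sum_{i,j}a_{ij}x_{ij}\Big)-\frac{1}{4m^2}\sum_j\Big(\vol_U(P_j)+\sum_i d_i x_{ij}\Big)^2, \]
a concave quadratic in $x=(x_{ij})$ whose dimension $T\ell\leq k\cdot 2^k$ depends only on $k$.

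The main obstacle is maximising this concave integer program in FPT time, since a straightforward enumeration of the $x_{ij}$ is merely $\XP$. My plan is to linearise the quadratic degree-tax: each $V_j:=\vol_U(P_j)+\sum_i d_i x_{ij}$ is a non-negative integer bounded by $2m$, so introducing auxiliary integer variables $z_j$ and the tangent-line constraints $z_j\geq 2vV_j-v^2$ for every integer $v\in\{0,1,\ldots,2m\}$ forces $z_j=V_j^2$ at any maximum (since $z_j$ enters the objective with a negative coefficient). This yields a standard integer linear program in $O(k\cdot 2^k)$ variables with polynomially many constraints, which by Lenstra's theorem is solvable in FPT time. Taking the maximum across all partitions of $U$ then returns $q^*(G)$.
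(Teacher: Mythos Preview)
Your proposal is correct and reaches the same conclusion as the paper, but the final optimisation step is genuinely different. Both arguments begin identically: compute a vertex cover $U$ of size $k$, group the independent-set vertices into at most $2^k$ neighbourhood types, enumerate the $B_k$ partitions $\mathcal{P}$ of $U$, and for each $\mathcal{P}$ reduce the problem to choosing non-negative integers $x_{ij}$ (with the row-sum constraints) so as to maximise a concave quadratic whose number of variables is bounded by $k\cdot 2^k$.

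Where the two diverge is in how this bounded-dimension integer quadratic programme is solved. The paper invokes Lokshtanov's FPT algorithm for \textsc{Integer Quadratic Programming} (parameter: number of variables plus largest coefficient magnitude); to fit that framework it first guesses the values of the two linear pieces $\theta,\phi$ of the objective (polynomially many choices) and then minimises the remaining pure quadratic $\psi$. Your route instead exploits the \emph{separable convex} structure of the degree tax: since each $V_j$ is an integer in $\{0,\dots,2m\}$, the convex term $V_j^2$ is exactly the upper envelope of its $2m+1$ integer tangent lines, so adding auxiliary variables $z_j$ with the constraints $z_j\ge 2vV_j-v^2$ yields an \emph{integer linear} programme in $O(k\cdot 2^k)$ variables and polynomially many constraints, solvable by Lenstra's classical fixed-dimension ILP algorithm.

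Your argument is arguably more elementary, relying only on the 1983 Lenstra result rather than the more recent IQP machinery, and it avoids the outer loop over $(\theta,\phi)$-values. The paper's approach, on the other hand, would continue to work even if the quadratic were not separable or not convex (IQP handles arbitrary integer quadratics with bounded coefficients), whereas your tangent-line linearisation is specific to separable convex terms with a polynomial range---which is precisely what the degree tax provides here. One small point worth making explicit in a write-up: after clearing denominators the objective is an integer bounded polynomially in $m$, so optimisation (rather than mere feasibility) via Lenstra is straightforward by binary search.
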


To prove this result, we make use of recent work of Lokshtanov \cite{lokshtanov15} which gives an FPT algorithm for the following problem.

\begin{framed}
\textsc{Integer Quadratic Programming}\\
\textit{Input:} An $n \times n$ integer matrix $Q$, an $m \times n$ integer matrix $A$, and an $m$-dimensional vector~$\mathbf{b}$.\\
\textit{Parameter:} $n + \alpha$, where $\alpha$ is the maximum absolute value of any entry in $A$ or $Q$.\\
\textit{Problem:} Find a vector $\mathbf{x} \in \mathbb{Z}^n$ which minimises $\mathbf{x}^TQ\mathbf{x}$, subject to $A\mathbf{x} \leq \mathbf{b}$.
\end{framed}

Our strategy can be summarised as follows.  We first observe that we may restrict our attention to partitions in which every part intersects the vertex cover.  Moreover, the vertices outside the vertex cover can be classified into at most $2^k$ ``types'' according to their neighbourhood (which by definition must be a subset of the vertex cover).  We then argue that the modularity of a partition depends only on (1) the inherited partition of the vertex cover and (2) the number of (non-vertex-cover) vertices of each type that belong to each of the parts.  Using this characterisation, we can reduce the problem of maximising the modularity to that of solving a collection of instances of \textsc{Integer Quadratic Programming}.

Before embarking on the proof of Theorem \ref{thm:vc}, we introduce some notation.  Suppose that the graph $G = (V,E)$ has $|E| = m$, and that $U = \{u_1,\ldots,u_k\}$ is a vertex cover for~$G$.  Let $\mathcal{P} = \{P_1,\ldots,P_{\ell}\}$ be a partition of $U$, and set $W = V \setminus U$ (so $W$ is an independent set).

We can partition the vertices of $W$ into $2^k$ sets based on their \emph{type}: the type $\tau_U(w) \in \{0,1\}^k$ of a vertex $w \in W$ describes which of the vertices in $U$ are neighbours of $w$.  Formally $\tau_U(w)_j = 1$ if $u_jw\in E(G)$ and $\tau_U(w)_j = 0$ otherwise.  For each $\sigma \in \{0,1\}^k$, we set $S_{\sigma}$ to be the set of all vertices in $W$ with type exactly $\sigma$, that is, $S_\sigma = \{ w\in W : \tau(w)=\sigma\}$.

Now let $\mathcal{A} = \{A_1,\ldots,A_r\}$ be a partition of $V$.  We write $x_{\sigma,i}^{\mathcal{A}}$ for the number of vertices of type $\sigma$ which are assigned to $A_i$, that is, $x_{\sigma,i}^{\mathcal{A}} = |S_{\sigma} \cap A_i|$.  Finally, we introduce 0-1 vectors to encode the sets $P_i \in \mathcal{P}$: for $1 \leq i \leq \ell$, we let $\pi^i \in \{0,1\}^k$ be given by $\pi_j^i = 1$ if $u_j \in P_i$, and $\pi_j^i = 0$ otherwise. An example is given in Figure~\ref{fig:type}.

We now argue that, if the partition $\mathcal{A}$ extends $\mathcal{P}$, we can compute the modularity of $\mathcal{A}$ using only the values $x_{\sigma,i}^{\mathcal{A}}$, together with information about~$\mathcal{P}$.  %This shows that the modularity depends only on the number of vertices of each type assigned to a given partition, and not the assignment of individual vertices.  

\begin{figure}[ht!]
  \begin{center}
    \begin{tikzpicture}[scale=0.8]
      
      %__________________________
         %SECOND PICTURE   
%      \draw [decorate, decoration={brace,amplitude=2pt,mirror},xshift=0pt,yshift=0pt, blue] (-0.4,-0.9) -- (0.4,-0.9) node [black,midway,below,xshift=0cm]  {\textcolor{blue}{\footnotesize $\alpha$}};

      \tikzstyle{vertex}=[circle,fill=black, draw=none, minimum size=4pt,inner sep=2pt]
      \node[vertex,blue] (u1) at (0,4){}; %a1
   \tikzstyle{vertex}=[diamond,fill=black, draw=none, inner sep=1.9pt]      
      \node[vertex,brown] (u2) at (0,0){}; %a4

      \node[left] at (0,4) {$u_1$};
      \node[left] at (-2,2) {};
      \node[left] at (0,0) {$u_2$};
      \node[right] at (2,2) {};

%      \draw (aT)--(aL)--(aB)--(aR)--(aT);
%      \draw (aL)--(aR);
%      \draw (aT) to [out=190,in=170, distance=4cm] (aB);

%      \draw[blue] (aT) to [out=200,in=70] (aL); 
%      \draw[blue] (aR) to [out=200,in=70] (aB); 

      \tikzstyle{vertex}=[circle,fill=black, draw=none, inner sep=2pt]      
      \node[vertex, blue] (vW1) at (4,4.8){}; %type 00
      \node[vertex, blue] (vW2) at (4,4.1){}; %type 00
      \node[vertex, blue] (vW3) at (4,3.4){}; %type 10
      \tikzstyle{vertex}=[rectangle,fill=black, draw=none, inner sep=2.3pt]      
      \node[vertex, red] (vW4) at (4,2.7){}; %type 01
      \node[vertex, red] (vW5) at (4,2.0){}; %type 01  
      \tikzstyle{vertex}=[diamond,fill=black, draw=none, inner sep=1.9pt]      
      \node[vertex, brown] (vW6) at (4,1.3){}; %type 01
      \node[vertex, brown] (vW7) at (4,0.6){}; %type 11
      \node[vertex, brown] (vW8) at (4,-0.1){}; %type 11
      
      \draw (u1) -- (vW3);
      \draw (u1) -- (vW7);
      \draw (u1) -- (vW8);

      \draw (u2) -- (vW4);
      \draw (u2) -- (vW5);
      \draw (u2) -- (vW6);
      \draw (u2) -- (vW7);
      \draw (u2) -- (vW8);

%      \node[vertex] (vLR1p) at (-0.7,1.25){};
%      \draw (vLR1p) -- (vLR1);
      
%      \node[rotate=90,blue] at (3.83,2.4) {\tiny$...$};

     % \begin{scope}[shift={(3.515,2.66)}]
     %       \node[vertex,blue] (vL) at (0,0){};
     %       \node[vertex,blue] (vR) at (0.65,0){};
     %       \draw[blue] (vL)--(vR);
     % \end{scope}

      \draw [decorate, decoration={brace,amplitude=3pt},xshift=-4pt,yshift=0pt] (4.6,4.9) -- (4.6,4.0) node [black,midway,xshift=0.05cm,right]  {\textcolor{black}{\footnotesize $S_{00}$}};

      \draw [decorate, decoration={brace,amplitude=2pt},xshift=-4pt,yshift=0pt, black] (4.6,3.6) -- (4.6,3.2) node [black,midway,xshift=0.05cm,right]  {\textcolor{black}{\footnotesize $S_{10}$}};

      \draw [decorate, decoration={brace,amplitude=3pt},xshift=-4pt,yshift=0pt] (4.6,2.9) -- (4.6,1.1) node [black,midway,xshift=0.05cm,right]  {\textcolor{black}{\footnotesize $S_{01}$}};

      \draw [decorate, decoration={brace,amplitude=3pt},xshift=-4pt,yshift=0pt] (4.6,0.8) -- (4.6,-0.3) node [black,midway,xshift=0.05cm,right]  {\textcolor{black}{\footnotesize $S_{11}$}};

    \end{tikzpicture}%\vspace{-5mm}
  \end{center}
\caption{An example of a graph with vertex cover $U=\{u_1, u_2\}$ and four sets of distinct types indicated for the vertices $W=V\backslash U$. For the vertex partition $\mathcal{A}=\{A_{\color{blue} 1},A_{\color{red} 2},A_{\color{brown} 3}\}$ indicated with  circles ({\color{blue} $\bullet$}), squares ({\tiny \color{red} $\blacksquare$}) and diamonds ({\small \color{brown} $\blacklozenge$}) respectively the only non-zero values of $x^\mathcal{A}_{\sigma, i}$ are: $x^\mathcal{A}_{00,1}=2$, $x^\mathcal{A}_{10,1}=1$, $x^\mathcal{A}_{01,2}=2$, $x^\mathcal{A}_{01,3}=1$ and $x^\mathcal{A}_{11,3}=2$. Note also that $\AA$ extends the partition $\mathcal{P}=\{\{u_1\}, \{u_2\}\}$ of $U$ but not the partition $\mathcal{P}'=\{\{u_1, u_2\}\}$ of $U$. }
\label{fig:type}
\end{figure}

\begin{lma}\label{lma:type-mod}
Let $U = \{u_1,\ldots,u_k\}$ be a vertex cover for $G = (V,E)$, where $|E|=m$, and let $\mathcal{P}$ be a partition of $U$.  If $\mathcal{A}$ is any partition of $V$ which extends $\mathcal{P}$ and has the property that every $A \in \mathcal{A}$ has non-empty intersection with $U$, then
$$q^E_\cA(G)=\frac{1}{m}\sum_{i=1}^{\ell} e(P_i) + \frac{1}{m}\sum_{(\sigma,i)} x_{\sigma,i}^{\mathcal{A}}(  \sigma \cdot \pi^i),$$
and
\begin{align*}
4m^2 q^D_{\AA} = 4\sum_i e(P_i)^2 & + 4\sum_{ (\sigma,i) } x_{\sigma,i}^{\mathcal{A}} e(P_i)  (\sigma\cdot (\mathbf{1}+\pi^i) )\\
 & \qquad + \sum_{(\sigma,i) (\sigma',j)} x_{\sigma,i}^{\mathcal{A}}x_{\sigma',j}^{\mathcal{A}} (\sigma\cdot (\mathbf{1}+\pi^i))(  \sigma' \cdot (\mathbf{1}+\pi^j)).
\end{align*}
\end{lma}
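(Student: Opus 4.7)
The plan is to compute $q^E_{\cA}(G)$ and $4m^2 q^D_{\cA}(G) = \sum_{A \in \cA}\vol(A)^2$ by decomposing each part of $\cA$ as its vertex-cover half together with its independent-set half. Since $\cA$ extends $\mathcal{P}$ and every part of $\cA$ meets $U$, after re-indexing we may assume $A_i \cap U = P_i$ for $i = 1, \dots, \ell$, so that $A_i = P_i \sqcup (A_i \cap W)$.

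For the edge contribution I would split each $e(A_i)$ as $e(P_i) + e(P_i, A_i \cap W)$, which is valid because $W$ is independent and so contains no internal edges. A vertex $w \in S_\sigma$ has exactly $\sigma \cdot \pi^i$ neighbours in $P_i$, and summing over $w \in A_i \cap W$ gives $e(P_i, A_i \cap W) = \sum_\sigma x^{\cA}_{\sigma,i}(\sigma \cdot \pi^i)$. Summing the identity $e(A_i) = e(P_i) + e(P_i, A_i \cap W)$ over $i$ and dividing by $m$ then yields the first displayed formula.

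For the degree tax I would first obtain a linear expression for $\vol(A_i)$ in terms of $e(P_i)$ and the type-counts $x^{\cA}_{\sigma,i}$. Each internal $P_i$-edge contributes $2$ to $\vol(A_i)$, accounting for the coefficient $2e(P_i)$, and a vertex $w \in S_\sigma \cap A_i$ contributes its own degree $|\sigma| = \sigma \cdot \mathbf{1}$; combined with the $P_i$-side count $\sigma \cdot \pi^i$ of the $w$-to-$P_i$ edges, the net per-vertex coefficient becomes $\sigma \cdot (\mathbf{1} + \pi^i)$. Writing $\vol(A_i)$ as $2e(P_i) + \sum_\sigma x^{\cA}_{\sigma,i}(\sigma \cdot (\mathbf{1}+\pi^i))$, I would then square and sum over $i$, which produces exactly the three blocks of the displayed identity: the constant $4\sum_i e(P_i)^2$ block, the cross term $4\sum_{(\sigma,i)} x^{\cA}_{\sigma,i}\,e(P_i)(\sigma\cdot(\mathbf{1}+\pi^i))$, and the quadratic double-sum in the $x^{\cA}_{\sigma,i}$'s.

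The only real difficulty is the bookkeeping for $\vol(A_i)$: one must carefully check that each edge incident to a vertex in $A_i$ is counted with the correct multiplicity on each side, and verify that the combined coefficient for a type-$\sigma$ vertex placed in $A_i$ is exactly $\sigma \cdot (\mathbf{1} + \pi^i)$, reflecting that its $P_i$-neighbours are counted once on $w$'s side (inside $|\sigma|$) and once on the $P_i$ side (inside $\sigma \cdot \pi^i$). Once that identity is in hand, the rest of the proof is a routine expansion of the square and collection of like terms.
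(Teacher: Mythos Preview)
Your proposal is correct and follows essentially the same approach as the paper: both split each part as $A_i = P_i \cup B_i$ with $B_i = A_i \cap W$, compute $e(A_i) = e(P_i) + \sum_\sigma x^{\cA}_{\sigma,i}(\sigma\cdot\pi^i)$ using that $W$ is independent, and derive $\vol(A_i) = 2e(P_i) + \sum_\sigma x^{\cA}_{\sigma,i}\,(\sigma\cdot(\mathbf{1}+\pi^i))$ before squaring and summing. The only cosmetic difference is that the paper reaches the volume formula via $\vol(A_i)=\vol(P_i)+\vol(B_i)$ with $\vol(P_i)=2e(P_i)+e(P_i,B_i)$, whereas you count edge contributions directly; the resulting expansion is identical.
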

\begin{proof}
Suppose that $\mathcal{P} = \{P_1,\ldots,P_{\ell}\}$ and $\mathcal{A} = \{A_1,\ldots,A_{\ell}\}$, where $P_i \subseteq A_i$ for each $i$; we set $B_i = A_i \cap W$ for each $1 \leq i \leq \ell$.  For any vertex $w \cap B_i$, we have that $e(w,P_i)$ is given by the dot product $\tau(w)\cdot \pi^i$; thus the number of edges between $P_i$ and $B_i$ for each $i$ is given by
\begin{equation}\label{eq.edgebetweeni}
e(P_i,B_i) = \sum_{\sigma \in \{0,1\}^k} x_{\sigma,i}^{\mathcal{A}}(\sigma \cdot \pi^i).
\end{equation}
Since there are no edges inside any set $B_i$, it follows that
\begin{equation*}
e(A_i) = e(P_i) + \sum_{\sigma \in \{0,1\}^k} x_{\sigma,i}^{\mathcal{A}}(\sigma \cdot \pi^i),
\end{equation*}
and hence we can write the edge contribution of $\mathcal{A}$ as
\begin{equation}\label{eq.edgecontVC}
q^E_\cA(G)=\frac{1}{m}\sum_{i=1}^{\ell} e(P_i) + \frac{1}{m}\sum_{(\sigma,i)} x_{\sigma,i}^{\mathcal{A}}(  \sigma \cdot \pi^i).
\end{equation}

Similarly for the degree tax, observe that a vertex $w\in W$ of type $\tau(w)$ has degree $\tau(w)\cdot \mathbf{1}\leq k$, and hence $\vol(B_i)=\sum_{\sigma} x_{\sigma,i}(\sigma\cdot \mathbf{1})$. Notice that $\vol(P_i)=2e(P_i)+e(P_i,B_i)$ and we already have an expression for $e(P_i,B_i)$ in terms of the $x_{\sigma, i}^{\mathcal{A}}$ in~\eqref{eq.edgebetweeni}. Hence, as $\vol(P_i \cup B_i) = \vol(P_i) + \vol(B_i)$, we have

$$4m^2 q^D_{\cA} = \sum_i \vol(P_i \cup B_i)^2 = \sum_i \Big( 2e(P_i)+  \sum_{\sigma} x_{\sigma, i}^{\mathcal{A}} (( \sigma \cdot \pi^i) + (\sigma\cdot \mathbf{1}) ) \Big)^2 $$

and thus rearranging,

\begin{align*} 
4m^2 q^D_{\AA} 
&= 4\sum_i e(P_i)^2 + 4\sum_i e(P_i) \sum_{\sigma} x_{\sigma,i}^{\mathcal{A}}(\sigma\cdot (\mathbf{1}+\pi^i)) \\
& \qquad \qquad + \Big( \sum_{\sigma} x_{\sigma,i}^{\mathcal{A}}(\sigma\cdot (\mathbf{1}+\pi^i) )\Big)^2 \\
&= 4\sum_i e(P_i)^2 + 4\sum_{ (\sigma,i) } x_{\sigma,i}^{\mathcal{A}} e(P_i)  (\sigma\cdot (\mathbf{1}+\pi^i) )\\
& \qquad \qquad + \sum_{(\sigma,i) (\sigma',j)} x_{\sigma,i}^{\mathcal{A}}x_{\sigma',j}^{\mathcal{A}} (\sigma\cdot (\mathbf{1}+\pi^i))(  \sigma' \cdot (\mathbf{1}+\pi^j)), 
\end{align*}
as required. \qed
\end{proof}

We are now ready to prove the main result of this section.

\begin{proof}[Proof of Theorem \ref{thm:vc}]
We will assume that the input to our instance of \textsc{Modularity} is a graph $G=(V,E)$, where $|E| = m$.  We may assume without loss of generality that we are also given as input a vertex cover $U = \{u_1,\ldots,u_k\}$ for~$G$ (as if not we can easily compute one in the allowed time).  We may further assume that~$G$ does not contain any isolated vertices, as we can delete any such vertices (in polynomial time) without changing the value of the maximum modularity (by Fact \ref{fact:isolated}).

Note that the total number of possible partitions of $U$ into non-empty parts is equal to the $k^{th}$ \emph{Bell number}, $B_k$ (and hence is certainly less than $k^k$).  It therefore suffices to describe an fpt-algorithm which determines, given some partition $\mathcal{P}$ of $U$, $$q^{\mathcal{P}}(G) = \max\{q_{\mathcal{A}}(G): \mathcal{A}[U] = \mathcal{P}\}.$$
The maximum modularity of~$G$ can then be calculated by taking $$\max\{q^{\mathcal{P}}(G): \mathcal{P} \text{ is a partition of } U\}.$$

From now on, we consider a fixed partition $\mathcal{P}=\{P_1,\ldots,P_{\ell}\}$ of $U$, and describe how to compute $q^{\mathcal{P}}(G)$.
It follows from {\Fc Facts \ref{fact:connected} and \ref{fact:singlevertex}}, together with the fact that $W$ is an independent set that, if $\mathcal{A} = \{A_1,\ldots,A_j\}$ is a partition of $V$ which achieves the maximum modularity, then every part $A_i$ has non-empty intersection with $U$.  We will call a partition with this properties a $U$-partition of~$G$.  It then suffices to maximise the modularity over all $U$-partitions in order to determine the value of $q^{\mathcal{P}}(G)$.

Now, by Lemma \ref{lma:type-mod}, we know that we can express the modularity of a $U$-partition $\mathcal{A}$ as
\begin{align}
q_{\mathcal{A}}(G) = \frac{1}{m}\sum_{i=1}^{\ell} e(P_i) & + \frac{1}{m}\sum_{(\sigma,i)} x_{\sigma,i}^{\mathcal{A}}(  \sigma \cdot \pi^i) - \frac{1}{m^2}\sum_i e(P_i)^2  \nonumber \\
& - \frac{1}{m^2}\sum_{ (\sigma,i) } x_{\sigma,i}^{\mathcal{A}} e(P_i)  (\sigma\cdot (\mathbf{1}+\pi^i) ) \nonumber \\ 
& - \frac{1}{4m^2}\sum_{(\sigma,i) (\sigma',j)} x_{\sigma,i}^{\mathcal{A}}x_{\sigma',j}^{\mathcal{A}} (\sigma\cdot (\mathbf{1}+\pi^i))(  \sigma' \cdot (\mathbf{1}+\pi^j)). \label{eqn:to-maximise-1}
\end{align}
As we have fixed the partition $\mathcal{P}$, all values $e(P_i)$ can be regarded as fixed constants.  In order to determine the maximum modularity we can obtain with a $U$-partition, we therefore need to find the values of $x_{\sigma,i}^{\mathcal{A}}$ which maximise this expression.

We can rewrite \eqref{eqn:to-maximise-1} as the sum of a constant term, two linear functions $\theta$ and $\phi$ of the $x_{\sigma,i}^{\mathcal{A}}$ and a quadratic function $\psi$ of the $x_{\sigma,i}^{\mathcal{A}}$ (up to scaling by constants):
\begin{align*}
q_{\mathcal{A}}(G) = & \underbrace{\frac{1}{m}\sum_{i=1}^{\ell} e(P_i) - \frac{1}{m^2}\sum_i e(P_i)^2}_{\text{constant}} \\
	& + \frac{1}{m} \underbrace{ \sum_{(\sigma,i)} x_{\sigma,i}^{\mathcal{A}} (\sigma \cdot \pi^i)}_{\theta(\mathcal{A})} - \frac{1}{m^2}\underbrace{\sum_{(\sigma,i)} x_{\sigma,i}^{\mathcal{A}} \, e(P_i)  (\sigma\cdot (\mathbf{1}+\pi^i) )}_{\phi(\mathcal{A})} \\
	& - \frac{1}{4m^2} \underbrace{\sum_{(i,\sigma)(j,\sigma')} x_{\sigma,i}^{\mathcal{A}}x_{\sigma',j}^{\mathcal{A}} (\sigma\cdot (\mathbf{1}+\pi^i))(  \sigma' \cdot (\mathbf{1}+\pi^j)) }_{\psi(\mathcal{A})}.
\end{align*}
To find the maximum value of $q_{\mathcal{A}}(G)$ over all $U$-partitions it therefore suffices to determine, for all possible values of $\theta(\mathcal{A})$ and $\phi(\mathcal{A})$, the minimum possible value of $\psi(\mathcal{A})$.  Before describing how to do this, we observe that the number of combinations of possible values for $\theta(\mathcal{A})$ and $\phi(\mathcal{A})$ and is not too large.  Note that $0 \leq \sum_{\sigma,i} x_{\sigma,i}^{\mathcal{A}} (\sigma \cdot \pi^i) < nk$, and $0 \leq \sum_{\sigma,i} x_{\sigma,i}^{\mathcal{A}} e(P_i)(\sigma \cdot (\mathbf{1} + \pi^i)) < n \binom{k}{2} 2k < nk^3$, so the number of possible pairs $\left(\theta(\mathcal{A}),\phi(\mathcal{A})\right)$ is at most $n^2k^4$.  Thus, if we know the minimum possible value of $\psi(\mathcal{A})$ corresponding to each possible pair $\left(\theta(\mathcal{A}),\phi(\mathcal{A})\right)$, we can compute the maximum modularity achieved by any $U$-partition $\mathcal{A}$ such that $\left(\theta(\mathcal{A}),\phi(\mathcal{A})\right) = (y,z)$, and maximising over the polynomial number of possible pairs $(y,z)$ will give $q^{\mathcal{P}}(G)$.

Now, given a possible pair of values $(y,z)$ for $\left(\theta(\mathcal{A}),\phi(\mathcal{A})\right)$, we describe how to compute 
$$\min\{\psi(\mathcal{A}): \mathcal{A} \text{ is a $U$-partition with } \theta(\mathcal{A}) = y \text{ and } \phi(\mathcal{A}) = z\}.$$  
Our strategy is to express this minimisation problem as an instance of \textsc{Integer Quadratic Programming} and then apply the FPT algorithm of \cite{lokshtanov15}.

In this instance, we have $n = \ell 2^k \leq k2^k$, and our vector of variables $\mathbf{x} = (x_1,\ldots,x_n)^T$ is given by
$$x_i = x_{\left(\sigma_{i \mod 2^k}\right),\left\lceil i/ 2^k \right\rceil}^{\mathcal{A}},$$
where $\sigma_1,\ldots,\sigma_{2^k}$ is a fixed enumeration of all vectors in $\{0,1\}^k$.  The matrix $Q$ expresses the value of $\psi(\mathcal{A})$ in terms of $\mathbf{x}$:  if we set $Q = \{q_{i,j}\}$ where 
$$q_{i,j} = \left( \sigma_{\left(i \mod 2^k\right)} \cdot \left(\mathbf{1} + \pi^{\left\lceil i/ 2^k \right\rceil}\right)\right) \left(\sigma_{\left(j \mod 2^k \right)} \cdot \left(\mathbf{1} + \pi^{\left\lceil j / 2^k \right\rceil}\right) \right),$$
then it is easy to see that $\psi(\mathcal{A}) = \mathbf{x}^T Q \mathbf{x}$.  Note also that the maximum absolute value of any entry in $Q$ is at most $4k^2$.

We now use the linear constraints to express the conditions that
\begin{enumerate}
\item $\theta(\mathcal{A}) = y$, 
\item $\phi(\mathcal{A}) = z$, and
\item the values $x_{i,\sigma}$ correspond to a valid $U$-partition $\mathcal{A}$.
\end{enumerate}
The first of these conditions can be expressed as a single linear constraint:
$$\sum_{(\sigma,i)} x_{\sigma,i}^{\mathcal{A}} (\sigma \cdot \pi^i) = y,$$
or equivalently $\mathbf{a}_1 \mathbf{x} = y$ where $\mathbf{a}_1$ is the $1 \times n$ row vector with $i^{th}$ entry equal to 
$$\sigma_{\left( i \mod 2^k \right)} \cdot \pi^{\left \lceil i/2^k \right \rceil}.$$
We can similarly express the second condition as a single linear constraint:
$$\sum_{(\sigma,i)} x_{\sigma,i}^{\mathcal{A}} \, e(P_i)  (\sigma\cdot (\mathbf{1}+\pi^i) ) = z,$$
or equivalently $\mathbf{a}_2 \mathbf{x} = z$, where $\mathbf{a}_2$ is the $1 \times n$ row vector with $i^{th}$ entry equal to 
$$e\left(P_{\left\lceil i/2^k \right\rceil}\right) \left(\sigma_{\left( i \mod 2^k \right)} \cdot \left( \mathbf{1} + \pi^{\left \lceil i / 2^k \right \rceil} \right) \right).$$
Note that every entry in the vectors $\mathbf{a}_1$ and $\mathbf{a}_2$ has absolute value no more than $2k^3$.  For the third condition, note that the values $x_{i,\sigma}$ correspond to a valid $U$-partition if and only if every $x_{i,\sigma}$ is non-negative, and for each $\sigma$ we have $\sum_{i = 1}^{\ell} x_{i,\sigma}^{\mathcal{A}} = |S_{\sigma}|$.

We can therefore express all three conditions in the form $A\mathbf{x} = \mathbf{b}$, where $A$ is a $\left(4 + (\ell+1)2^k\right) \times n$ and $\mathbf{b}$ is a $\left(4 + (\ell+1)2^k\right)$-dimensional vector (notice that we use two inequalities to express each of the linear equality constraints).

Altogether, this means that the solution to this \textsc{Integer Quadratic Programming} instance will determine the values of $x_{i,\sigma}^{\mathcal{A}}$ which minimize (out of all values corresponding to some $U$-partition $\mathcal{A}$) the value of $\psi(\mathcal{A})$, subject to the additional requirement that $\theta(\mathcal{A}) = y$ and $\phi(\mathcal{A}) = z$.  Note that the number of variables $n$ is at most $k2^k$ and the largest absolute value of any entry in $A$ or $Q$ is at most $2k^3$, so the parameter in the instance of \textsc{Integer Quadratic Programming} is bounded by a function of $k$.  This completes the proof. \qed
\end{proof}

We note the algorithm described can easily be modified to output an optimal partition.

\subsection{Parameterisation by treewidth}
\label{sec:tw}

In this section we demonstrate that \textsc{Modularity}, when parameterised by the treewidth of the input graph~$G$, belongs to $\XP$ and so is solvable in polynomial time on graph classes whose treewidth is bounded by some fixed constant.  We further show that for any fixed $\eps>0$ there is an FPT-algorithm, parameterised by treewidth, which computes a factor $(1-\eps)$-approximation; i.e.\ returning a value between $(1-\eps)\q$ and $\q$ where $\q$ is the maximum modularity of the graph.

\begin{thm}\label{thm:tw-xp}
\textsc{Modularity} parameterised by the treewidth of the input graph~$G$ is in~$\XP$.
\end{thm}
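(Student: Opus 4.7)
The plan is a dynamic-programming algorithm over a nice tree decomposition of~$G$ of width $w=\tw(G)$, which we obtain in linear time via Bodlaender's theorem~\cite{bodlaender93} and then convert into a nice form containing introduce-vertex, introduce-edge, forget, and join nodes~\cite{kloks94}.  For each node~$t$ the DP records, for every partial partition of $V_t$ that could extend to a partition of $V$, its restriction to the bag $\mathcal{D}(t)$ together with enough arithmetic information about each still-live block to complete its contribution to modularity once the block is finalised.  Correctness of restricting attention to such ``tree-local'' partitions rests on Fact~\ref{fact:connected}: in an optimal partition every part induces a connected subgraph of~$G$, and any connected subgraph of~$G$ spans a connected subtree of~$T$, so each part's lifetime in the decomposition is a single connected subtree, which the DP can faithfully track.

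Concretely, a DP entry at~$t$ is indexed by a tuple $(\pi,(e_B)_{B\in\pi},(v_B)_{B\in\pi})$, where $\pi$ is a partition of $\mathcal{D}(t)$ and, for each block $B\in\pi$, $e_B$ is the number of edges of $G[V_t]$ both of whose endpoints lie in the extended part $B^*\subseteq V_t$ containing~$B$, while $v_B=\sum_{u\in B^*}d_u$ is its current volume.  The stored value is the maximum, over partitions of $V_t$ consistent with the index, of $S:=4m\sum_{A}e(A)-\sum_{A}\vol(A)^2$, summed over parts $A$ already ``closed'' (i.e.\ disjoint from $\mathcal{D}(t)$).  Since $e_B\leq m$, $v_B\leq 2m$, $|\pi|\leq w+1$, and the number of partitions of a set of size $w+1$ is bounded by a function of $w$ alone, the total number of entries per bag is $m^{O(w)}$.

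The transitions are routine.  An introduce-vertex node for~$v$ branches over which block of $\pi$ to place $v$ into (or opens a new singleton), and updates the corresponding volume by $d_v$.  An introduce-edge node for $uv$ leaves the tuple unchanged unless $u$ and~$v$ lie in the same block $B$, in which case $e_B$ is incremented by one.  A forget-vertex node for~$v$ removes $v$ from $\pi$: if $v$'s block still has other bag representatives nothing else happens, but otherwise the block closes and we absorb $4m\cdot e_B - v_B^2$ into~$S$ before dropping $B$ from $\pi$.  A join node combines two children only when their bag partitions agree; edge counts then add as $e_B=e_B^L+e_B^R$ (which is consistent because the edge-introduce convention processes each edge of~$G$ at exactly one node of the decomposition), volumes combine as $v_B=v_B^L+v_B^R-\sum_{u\in B}d_u$ to avoid double-counting bag vertices, and $S$-values simply add.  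At the root, whose bag we may assume is empty, every part is closed and the best value of $S$ equals $4m^2\cdot \q(G)$, yielding the maximum modularity after division by $4m^2$.

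The main delicate step will be the join operation: one must verify that the combination rule above enumerates each valid partition of $V_t$ exactly once and produces the correct values of $e_B$ and $v_B$ for the combined subtree, and this is precisely where the edge-introduce variant of nice tree decomposition is essential.  With $m^{O(w)}$ entries per bag, $O(n)$ bags in the decomposition, and polynomial-time processing per entry, the overall running time is $n^{O(w)}$, placing \textsc{Modularity} in $\XP$ when parameterised by treewidth.
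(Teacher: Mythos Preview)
Your proposal is correct and follows essentially the same dynamic-programming scheme as the paper: both track, for each bag, the induced partition together with the running edge count and volume of each live block, store the accumulated modularity contribution of already-closed blocks, and rely on Fact~\ref{fact:connected} to justify that blocks never reopen, yielding $m^{O(\tw)}$ states per node.  The only noteworthy difference is that you use the introduce-edge variant of nice tree decompositions, which lets your join rule add edge counts directly ($e_B=e_B^L+e_B^R$), whereas the paper uses the classical variant and must subtract $e(P)$ at a join to undo double-counting of bag-internal edges; both are standard and equally valid.
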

\begin{proof}
As the proof makes use of standard dynamic programming techniques on tree decompositions, we only give an outline proof here.  Suppose that~$G$ has $n$ vertices and $m$ edges, and has treewidth $k$.  We will assume that we are given a nice tree decomposition $(T,\mathcal{D})$ (where $T$ is a tree and $\mathcal{D} = \{\mathcal{D}(t): t \in V(T)\}$) of~$G$, of width $k$, as part of the input (if not we can compute one in FPT time).

The proof relies heavily on Fact \ref{fact:connected}.  This means we can compute the optimum modularity without considering partitions that induce disconnected subgraphs; hence, for any node $t \in V(T)$, we need only consider partitions $\mathcal{A}$ with the property that, if $A \in \mathcal{A}$ does not intersect $\mathcal{D}(t)$, then all vertices in $A$ only appear in bags indexed by nodes in precisely one connected component of $T \setminus t$.

We compute the modularity by working upwards from the leaves in the standard way.  As we do this, we need to keep track of relevant statistics for the parts that intersect the current bag (\emph{liquid} parts) and also the total contribution to the modularity from the parts (\emph{frozen} parts) which contain only vertices from bags indexed by descendants of the current node (and so by the reasoning above cannot accept more vertices from elsewhere in the graph).

For any node $t \in V(T)$, a valid \emph{state} of $t$ consists of the following: 
\begin{enumerate}
\item a partition $\mathcal{P}$ of $\mathcal{D}(t)$;
\item a function $\alpha : \mathcal{P} \rightarrow [m]$ such that $\alpha(P_i) \geq e(P_i)$ for each $P_i \in \mathcal{P}$;
\item a function $\beta : \mathcal{P} \rightarrow [2m]$ such that $\beta(P_i) \geq \vol(P_i)$ for each $P_i \in \mathcal{P}$.
\end{enumerate}
Here $\mathcal{P}$ records the restriction of a partition to $\mathcal{D}(t)$, $\alpha$ keeps track of the number of edges captured so far in each of the liquid parts, and $\beta$ keeps track of the volume so far of each of the liquid parts.  Notice that the total number of possible states for any node $t$ is at most $(k+1)^{(k+1)} \cdot m^{(k+1)} \cdot (2m)^{(k+1)} = m^{\mathcal{O}(k)}$.

For each possible state of a node $t$, we need to keep track of the maximum contribution to modularity from frozen parts we can achieve consistent with the liquid parts having the specified state: this is done with a function $\sigma_t$, the \emph{signature} of $t$.  Given any state $(\mathcal{P},\alpha,\beta)$ of $t$, we first define a \emph{$(t,\mathcal{P},\alpha,\beta)$-partition} to be any partition $\mathcal{A}$ of $V_t$ such that:
\begin{enumerate}
\item $\mathcal{P} = \mathcal{A}[\mathcal{D}(t)]$;
\item for all $A \in \mathcal{A}$ with $A \cap \mathcal{D}(t) \neq \emptyset$:
\begin{itemize}
\item $\alpha\left(A \cap \mathcal{D}(t)\right) = e(A)$, and
\item $\beta\left(A \cap \mathcal{D}(t) \right) = \vol(A)$.
\end{itemize}
\end{enumerate}
We then set
\begin{align*}
\sigma_t(\mathcal{P},\alpha,\beta) = \max \bigg\{ \frac{1}{m} \sum_{B \in \mathcal{B}} e(B) - \frac{1}{m^2} \sum_{B \in \mathcal{B}} & \vol(B)^2 : \\
& \mathcal{A} \text{ is a $(t, \mathcal{P}, \alpha, \beta)$-partition and } \\
& \mathcal{B} = \{A \in \mathcal{A}: A \cap \mathcal{D}(t) = \emptyset\} \bigg\}.
\end{align*}
Throughout the proof we adopt the convention that the maximum value of an empty set is $- \infty$.

It is clear that, with knowledge of $\sigma_r$ for the root $r$ of the tree decomposition, we can easily determine the maximum modularity of~$G$.  It therefore remains to outline how we compute $\sigma_t$ for the four types of node in the nice tree decomposition, using only information about the values of $\sigma_{t'}$ where $t'$ is a child of $t$.  We begin by observing that if $t$ is a leaf node then we can exhaustively consider all possibilities in time depending only on $k$.

Now suppose $t$ is an introduce node with child $t'$, where $\mathcal{D}(t) = \mathcal{D}(t') \cup \{v\}$.  Given any state $(\mathcal{P},\alpha,\beta)$ of $t$, we say that a state $(\mathcal{P}',\alpha',\beta')$ of $t'$ is introduce-compatible with $(\mathcal{P},\alpha,\beta)$ if:
\begin{itemize}
\item $\mathcal{P}' = \mathcal{P} \setminus \{v\}$;
\item for every $P \in \mathcal{P}$, if $v \notin P$ then $\alpha'(P) = \alpha(P)$, and if $v \in P$ (but $P \setminus \{v\} \neq \emptyset$) then $\alpha'(P) = \alpha(P) - |\{u \in P: uv \in E(G)\}|$;
\item for every $P \in \mathcal{P}$, if $v \notin P$ then $\beta'(P) = \beta(P)$, and if $v \in P$ (but $P \setminus \{v\} \neq \emptyset$) then $\beta'(P) = \beta(P) - d(v)$.
\end{itemize}
It then follows that $\sigma_t(\mathcal{P},\alpha,\beta)$ is equal to
$$\max \{\sigma_{t'}(\mathcal{P}',\alpha',\beta'): (\mathcal{P}',\alpha',\beta') \text{ is introduce-compatible with } (\mathcal{P},\alpha,\beta)\}.$$

Next, suppose that $t$ is a forget node with child $t'$, where $\mathcal{D}(t) = \mathcal{D}(t') \setminus \{v\}$.  Given any state $(\mathcal{P},\alpha,\beta)$ of $t$, we define two functions $\sigma_t^1$ and $\sigma_t^2$; these functions correspond to the case where one of the parts that is liquid at $t'$ becomes frozen at $t$ (if $v$ was the last vertex in its part), and the case where all parts that are liquid at $t'$ remain liquid at $t$, respectively.  We set 
\begin{align*}
\sigma_t^1 (\mathcal{P}, \alpha, \beta) = \max \bigg\{ \sigma_{t'}(\mathcal{P}',\alpha',\beta')&  + \frac{1}{m} \alpha'\left(\{v\}\right) - \frac{1}{4m^2}\beta'\left(\{v\}\right)^2:  \\
& \mathcal{P}' = \mathcal{P} \cup \{v\} \text{ and, for all } P \in \mathcal{P},\\
&\alpha'(P) = \alpha(P) \text{ and } \beta'(P) = \beta(P) \bigg\},
\end{align*}
and
\begin{align*}
\sigma_t^2 (\mathcal{P},\alpha, \beta) = \max \bigg\{\sigma_{t'}(\mathcal{P}',\alpha',\beta'): \; & \mathcal{P} = \mathcal{P}' \setminus \{v\}, |\mathcal{P}'| = |\mathcal{P}| \text{ and, } \\
& \text{for all } P \in \mathcal{P}',\alpha'(P) = \alpha(P \setminus v) \\
& \text{and } \beta'(P) = \beta(P \setminus v) \bigg\}.
\end{align*}
We then see that 
$$\sigma_t(\mathcal{P},\alpha,\beta) = \max \left\{ \sigma_t^1(\mathcal{P},\alpha,\beta), \sigma_t^2(\mathcal{P},\alpha,\beta)\right\}.$$

Finally, suppose that $t$ is a join node with children $t_1$ and $t_2$, where $\mathcal{D}(t_1) = \mathcal{D}(t_2) = \mathcal{D}(t)$.  In this case we see that 
\begin{align*}
\sigma_t(\mathcal{P},\alpha,\beta) = \max \bigg\{  \sigma_{t_1}(\mathcal{P},\alpha_1,\beta_1) + & \sigma_{t_2}(\mathcal{P},\alpha_2,\beta_2): \text{ for all } P \in \mathcal{P}, \\
& \alpha(P) = \alpha_1(P) + \alpha_2(P) - e(P) \text{ and } \\
& \beta(P) = \beta_1(P) + \beta_2(P) - \vol(P) \bigg\}.
\end{align*} \qed
\end{proof}

To obtain our FPT approximation result, we use a very similar approach; the key is to restrict our attention to partitions with only a constant number of parts.  For any constant $c \in \mathbb{N}$, we write $q_{\leq c}(G)$ for the maximum modularity for~$G$ achievable with a partition into at most $c$ parts, that is
$$q_{\leq c}(G) = \max_{|\mathcal{A}| \leq c} q_{\mathcal{A}}(G).$$
We refer to the problem of deciding whether $q_{\leq c}(G) \geq q$ for a given input graph~$G$ and constant $q \in [0,1]$ as $c$-\textsc{Modularity}.  We now argue that $c$-\textsc{Modularity} is in $\FPT$ parameterised by the treewidth of the input graph.  The crucial difference from our XP algorithm above is the fact that, when we fix the number of parts in the partition, we can no longer assume that every part is connected.  However, if the maximum number of parts $c$ is a constant, we can keep track of the necessary statistics for every possible part, not just those that intersect the bag under consideration.

\begin{lma}\label{lma:const-parts}
$c$-\textsc{Modularity} is in $\FPT$ when parameterised by the treewidth of the input graph.
\end{lma}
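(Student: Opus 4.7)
The plan is to adapt the dynamic programming scheme of Theorem~\ref{thm:tw-xp}, exploiting the fact that $c$ is a fixed constant. Because $c$ is constant, we can pre-label the parts by $\{1,\ldots,c\}$ and maintain statistics for \emph{every} part at every node of a nice tree decomposition, not merely for the ``liquid'' parts that intersect the current bag. Consequently we do not need to invoke Fact~\ref{fact:connected}; partitions into at most $c$ parts need not induce connected subgraphs, but this causes no difficulty because all parts are tracked globally.

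Given a nice tree decomposition $(T,\mathcal{D})$ of $G$ of width $k$ (computable in FPT time), a \emph{state} at a node $t \in V(T)$ consists of a labelling $\ell : \mathcal{D}(t) \to \{1,\ldots,c\}$ recording which part each bag vertex belongs to, a function $\alpha : \{1,\ldots,c\} \to \{0,1,\ldots,m\}$ recording the number of edges of $G[V_t]$ already captured within each part, and a function $\beta : \{1,\ldots,c\} \to \{0,1,\ldots,2m\}$ recording the $G$-volume of vertices of $V_t$ assigned to each part. The number of states at $t$ is at most $c^{k+1} \cdot (m+1)^c \cdot (2m+1)^c$, which is $g(k)\cdot \mathrm{poly}(n)$ for a function $g$ depending only on $k$ and $c$. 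The DP computes, for each state, a boolean indicating whether it is realisable by a partition of $V_t$ consistent with $\ell,\alpha,\beta$.

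The transitions follow the standard four cases. At a leaf node (empty bag) only the trivial state is achievable. At an introduce node adding vertex $v$ to the child's bag we guess $\ell(v) \in \{1,\ldots,c\}$, increase $\beta(\ell(v))$ by $d_G(v)$, and increase $\alpha(\ell(v))$ by the number of neighbours of $v$ in the child's bag that carry the same label (all other neighbours of $v$ in $V_t$ lie outside $\mathcal{D}(t)$ and so, by the properties of a tree decomposition, would violate property~3). At a forget node removing $v$, the new state is obtained by restricting the labelling and retaining $\alpha,\beta$, taking the disjunction over possible previous labels of $v$. At a join node with children $t_1,t_2$, compatible states (those with identical labellings of the shared bag) combine via $\alpha(i) = \alpha_1(i) + \alpha_2(i) - e(\ell^{-1}(i) \cap \mathcal{D}(t))$ and $\beta(i) = \beta_1(i) + \beta_2(i) - \vol(\ell^{-1}(i) \cap \mathcal{D}(t))$, the subtracted terms correcting for bag contributions that would otherwise be counted in both subtrees. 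At the root (empty bag) each achievable $(\alpha,\beta)$ yields a valid partition of $V(G)$ whose modularity is $\frac{1}{m}\sum_{i=1}^c \alpha(i) - \frac{1}{4m^2}\sum_{i=1}^c \beta(i)^2$, and we return the maximum over all achievable states.

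The main (mild) obstacle is the bookkeeping at the join node, where one must be careful that edges inside the shared bag and the degrees of its vertices are not double-counted when merging statistics from the two children; the corrections above handle this. Partitions using strictly fewer than $c$ parts are captured automatically by allowing some labels in $\{1,\ldots,c\}$ to remain unused. Correctness then follows by a routine induction on the structure of the nice tree decomposition, and the overall running time is $g(k)\cdot \mathrm{poly}(n)$, placing $c$-\textsc{Modularity} in $\FPT$ parameterised by treewidth.
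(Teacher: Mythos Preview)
Your proposal is correct and follows essentially the same approach as the paper: both define a state at a tree-decomposition node as a labelling of the bag into $\{1,\ldots,c\}$ together with per-part edge-count and volume functions $\alpha,\beta:[c]\to\mathbb{Z}$, record a boolean indicating realisability, and use the identical introduce/forget/join transitions (including the same correction terms at join nodes). The paper's state count $c^{k+1}m^{\mathcal{O}(c)}$ and your $c^{k+1}(m+1)^c(2m+1)^c$ agree, and your observation that Fact~\ref{fact:connected} is not needed here matches the paper's remark that parts may be disconnected when $c$ is fixed.
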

\begin{proof}
The strategy is broadly the same as that used in the proof of Theorem \ref{thm:tw-xp}, however when the number of parts is fixed we can no longer assume that every part in the optimal partition is connected.  Thus, instead of recording statistics relating to each part that intersects the bag currently under consideration, we keep track of the same statistics for each of the $c$ (possibly empty) parts allowed in the partition.  Formally, for any node $t \in V(T)$, a valid state of $t$ consists of:
\begin{enumerate}
\item a function $\pi: \mathcal{D}(t) \rightarrow [c]$;
\item a function $\alpha \colon [c] \rightarrow [m]$ such that $\alpha(i) \geq e(\pi^{-1}(i))$ for all $i \in [c]$;
\item a function $\beta \colon [c] \rightarrow [2m]$ such that $\beta(i) \geq \vol(\pi^{-1}(i))$ for all $i \in [c]$.
\end{enumerate}
Here $\pi$ records the mapping of vertices of $\mathcal{D}(t)$ to the $c$ possible parts, $\alpha$ keeps track of the number of edges captured so far in each of the $c$ parts, and $\beta$ the volume so far of each part.  Notice that the number of possible states for any node $t$ is at most $c^{k+1} \cdot m^{c} \cdot (2m)^c = c^{k+1} m^{\mathcal{O}(c)}$.

Given any state $(\pi,\alpha,\beta)$ of $t$, we define a \emph{$(t,\pi,\alpha,\beta)$-partition} to be any partition $\mathcal{A} = \{A_1,\ldots,A_c\}$ of $V_t$ such that:
\begin{enumerate}
\item $v \in A_{\pi(v)}$ for each $v \in \mathcal{D}(t)$;
\item for each $i \in [c]$:
\begin{itemize}
\item $\alpha(i) = e(A_i)$, and
\item $\beta(i) = \vol(A_i)$.
\end{itemize}
\end{enumerate}
We then set 
\begin{equation*}
\theta_t(\pi,\alpha,\beta) = \begin{cases}
								1 	&\text{if there exists a $(t,\pi,\alpha,\beta)$-partition of $V_t$,} \\
								0   &\text{otherwise.}
							 \end{cases}
\end{equation*}
It is clear that, if $r$ is the root of the tree decomposition,
$$q_{\leq c}(G) = \max_{\theta_r(\pi,\alpha,\beta) = 1} \left\lbrace \frac{1}{m} \sum_{i = 1}^c \alpha(i) + \frac{1}{m^2} \sum_{i = 1}^c \beta(i)^2 \right\rbrace.$$
Thus it suffices to compute all values of $\theta_r$.  Note that if $t$ is a leaf node we can consider all possibilities in time depending only on $k$ and $c$; we now outline how to compute the values of $\theta_t$ for a node $t$, given the values for its children.

Suppose first that $t$ is an introduce node with child $t'$, where $\mathcal{D}(t) = \mathcal{D}(t') \cup \{v\}$.  Given any state $(\pi,\alpha,\beta)$ of $t$, we say that a state $(\pi',\alpha',\beta')$ of $t'$ is introduce-compatible with $(\pi,\alpha,\beta)$ if:
\begin{itemize}
\item $\pi' = \pi|_{\mathcal{D}(t')}$;
\item for every $i \in [c]$, if $\pi(v) \neq c$ then $\alpha'(i) = \alpha(i)$, and if $\pi(v) = i$ then $\alpha'(i) = \alpha(i) - |\{u \in \pi^{-1}(i): uv \in E(G)\}|$;
\item for every $i \in [c]$, if $\pi(v) \neq i$ then $\beta'(i) = \beta(i)$, and if $\pi(v) = P$ then $\beta'(i) = \beta(i) - d(v)$.
\end{itemize}
It then follows that $\theta_t(\mathcal{P},\alpha,\beta)$ is equal to
$$\max \{\theta_{t'}(\pi',\alpha',\beta'): (\pi',\alpha',\beta') \text{ is introduce-compatible with } (\pi,\alpha,\beta)\}.$$

Next, suppose that $t$ is a forget node with child $t'$, where $\mathcal{D}(t) = \mathcal{D}(t') \setminus \{v\}$.  In this case we have
$$\theta_t (\pi,\alpha,\beta) = \max \{\theta_{t'}(\pi',\alpha',\beta'): \; \pi = \pi'|_{\mathcal{D}(t)}, \alpha' = \alpha \text{ and } \beta' = \beta \}.$$

Finally, suppose that $t$ is a join node with children $t_1$ and $t_2$, where $\mathcal{D}(t_1) = \mathcal{D}(t_2) = \mathcal{D}(t)$.  In this case we see that 
\begin{align*}
\theta_t(\pi,\alpha,\beta) = \max \bigg\{ & \theta_{t_1}(\pi,\alpha_1,\beta_1) \cdot \theta_{t_2}(\pi,\alpha_2,\beta_2): \\
& \forall i \in [c], \alpha(i) = \alpha_1(i) + \alpha_2(i) - e(\pi^{-1}(i)) \\
& \text{ and } \beta(i) = \beta_1(i) + \beta_2(i) - \vol(\pi^{-1}(i)) \bigg\}.
\end{align*} \qed
\end{proof}

Recall (Fact \ref{fact:approxbycALT}) that $\q(G) \geq q_{\leq c}(G) > \q(G)\big(1-\frac{1}{c}\big)$; thus, for any constant $\epsilon > 0$, we obtain a factor $(1-\eps)$-approximation by solving $\lceil \frac{1}{\epsilon} \rceil$-\textsc{Modularity}.  This immediately gives the following result.

\begin{cor}
Given any constant $\epsilon > 0$, there is an FPT-algorithm, parameterised by the treewidth of the input graph~$G$, that returns a partition $\AA$ with $q_{\AA}(G)>(1-\eps)\q(G)$.
\end{cor}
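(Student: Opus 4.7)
The plan is to combine Fact~\ref{fact:approxbycALT} with Lemma~\ref{lma:const-parts} in the obvious way. Setting $c := \lceil 1/\eps \rceil$, a constant depending only on $\eps$, Fact~\ref{fact:approxbycALT} gives $q_{\leq c}(G) > \q(G)(1 - 1/c) \geq (1-\eps)\q(G)$, so any partition into at most $c$ parts achieving the value $q_{\leq c}(G)$ already satisfies the required approximation guarantee. It therefore suffices to produce, in FPT time parameterised by $\tw(G)$, such an optimal $c$-partition (rather than merely the value $q_{\leq c}(G)$ itself).

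The value $q_{\leq c}(G)$ is already computed in FPT time by the dynamic programming of Lemma~\ref{lma:const-parts}. To recover an actual partition, I would augment that DP in the standard way with backpointers: at each node $t$ of the nice tree decomposition and each state $(\pi,\alpha,\beta)$ with $\theta_t(\pi,\alpha,\beta)=1$, record which child state (or, at a join node, which pair of child states together with the split $\alpha = \alpha_1+\alpha_2-e(\pi^{-1}(\cdot))$ and $\beta = \beta_1+\beta_2-\vol(\pi^{-1}(\cdot))$) witnessed the value~$1$. After computing $\theta_r$ at the root $r$, select the root state maximising the modularity expression stated at the end of the proof of Lemma~\ref{lma:const-parts}, and trace the backpointers down the decomposition, reading off the part-label $\pi(v)$ at the introduce node of each vertex $v$ to assemble the full $c$-partition $\AA$ of $V(G)$. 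By construction $q_\AA(G) = q_{\leq c}(G) > (1-\eps)\q(G)$.

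The extra backpointer bookkeeping adds only constant overhead per state, so the running time remains $f(k,c)\cdot \mathrm{poly}(n,m)$ with $k = \tw(G)$; since $c = \lceil 1/\eps \rceil$ is a constant, this is FPT in $k$ alone. There is essentially no technical obstacle here: the only thing to verify is that backpointer reconstruction is consistent with the DP transitions for introduce, forget and join nodes, which follows immediately from the definition of a $(t,\pi,\alpha,\beta)$-partition and the correctness proof already implicit in Lemma~\ref{lma:const-parts}.
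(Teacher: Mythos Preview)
Your proposal is correct and follows essentially the same approach as the paper: set $c = \lceil 1/\eps \rceil$, invoke Fact~\ref{fact:approxbycALT} to get $q_{\leq c}(G) > (1-\eps)\q(G)$, and solve $c$-\textsc{Modularity} via Lemma~\ref{lma:const-parts}. Your added detail on backpointer reconstruction to output an actual partition (rather than just the value) is a standard elaboration that the paper leaves implicit.
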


We conclude this section by noting that sparse graphs, in particular graphs~$G$ with low tree width, $\tw(G)$, and maximum degree, $\triangle(G)$, can have high maximum modularity. In particular Theorem~1.11 of~\cite{treelike} shows $\q(G)\geq 1-2((\tw(G)+1)\triangle(G)/|E(G)|)^{1/2}$.

\subsection{Parameterisation by max leaf number}

In this section we demonstrate that \textsc{Modularity} can be solved in time linear in the number of connected subgraphs of the input graph~$G$; as a consequence of this result, we deduce that the problem belongs to $\XP$ when parameterised by the max leaf number of~$G$.

\begin{thm}\label{thm:conn-subgs}
Let~$G$ be a graph on $n$ vertices with $m$ edges and at most $h$ connected subgraphs.  Then \textsc{Modularity} can be solved in time $\mathcal{O}(h^2n)$.
\end{thm}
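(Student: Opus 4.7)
The strategy will rest on Facts~\ref{fact:isolated} and~\ref{fact:connected}: after deleting any isolated vertices we may confine attention to partitions of $V(G)$ whose parts each induce a connected subgraph of~$G$. First I would enumerate the $h$ connected subgraphs of~$G$ using a standard polynomial-delay enumeration algorithm, and for each connected subgraph $S$ precompute the quantity $f(S) := e(S)/m - \vol(S)^2/(4m^2)$, interpreted as the contribution of~$S$ to modularity were it to appear as a part. For each connected subgraph $X$ of~$G$ I then want to determine
$$M(X) := \max_{\mathcal{A}} \sum_{A \in \mathcal{A}} f(A),$$
the maximum taken over partitions $\mathcal{A}$ of $X$ into subsets each inducing a connected subgraph of~$G$; the key recursion is
$$M(X) = \max_{\substack{\emptyset \neq S \subseteq X \\ G[S]\text{ connected}}} \biggl( f(S) + \sum_{C} M(C) \biggr),$$
where $C$ ranges over the connected components of $G[X \setminus S]$ and $M(\emptyset) := 0$.

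To justify the recursion, note that any partition of $X$ into $G$-connected parts is obtained by choosing some distinguished part $S$ (which is a $G$-connected subset of $X$) and then partitioning $X \setminus S$ into $G$-connected subsets; since each remaining part is connected in~$G$, it must lie entirely within a single connected component of $G[X \setminus S]$. Conversely, any such choice produces a valid partition of~$X$. The crucial structural observation is that every connected component of $G[X \setminus S]$ is itself a connected subgraph of~$G$, so all subproblems appearing in the recursion are indexed by connected subgraphs of~$G$, of which there are at most~$h$.

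The algorithm then computes $M(X)$ for all connected subgraphs $X$ in order of non-decreasing size. For each $X$ it iterates over all connected subgraphs $S$ of~$G$ contained in~$X$, and for each such~$S$ computes the components of $G[X \setminus S]$ in $O(n)$ time and retrieves the stored values $M(C)$. The total number of pairs $(X,S)$ with both entries among the $h$ connected subgraphs of~$G$ and $S \subseteq X$ is at most $h^2$, giving total running time $O(h^2 n)$; the value $q^*(G)$ is finally recovered as $\sum_C M(C)$ taken over the connected components~$C$ of~$G$. The main obstacle is confirming that the recursion's state space stays within the enumerated connected subgraphs, which reduces to the observation that a connected component of any induced subgraph of~$G$ is itself a connected subgraph of~$G$, combined with Fact~\ref{fact:connected}.
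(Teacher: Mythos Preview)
Your proposal is correct and follows essentially the same strategy as the paper: enumerate all connected induced subgraphs, run a dynamic program over them in order of increasing size, and read off $q^*(G)$ as the sum of the DP values over the connected components of~$G$. The only difference is the shape of the recurrence: the paper bipartitions each connected $H$ into two nonempty \emph{connected} halves $(X,Y)$ and sets $q^*(H,G)=\max\{f(H),\ \max_{(X,Y)} q^*(G[X],G)+q^*(G[Y],G)\}$ (using the small observation that any partition of $H$ into at least two connected parts admits such a connected bipartition), whereas you peel off a single part $S$ and recurse on the components of $G[X\setminus S]$. Both recurrences are valid and yield the same $\mathcal{O}(h^2 n)$ bound; yours sidesteps the bipartition argument at the cost of having to identify and look up several component subproblems per step rather than a single complementary piece.
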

\begin{proof}
We will assume without loss of generality (by Fact \ref{fact:isolated}) that~$G$ contains no isolated vertices.  For any induced subgraph $H$ of~$G$, and partition $\mathcal{A}_H$ of $V(H)$, we write
$$q_{\mathcal{A}_H}(H,G) = \frac{1}{m} \sum_{A \in \mathcal{A}_H} e(A) - \frac{1}{4m^2} \sum_{A \in \mathcal{A}_H} \vol(A)^2,$$
where $\vol(A)$ denotes the volume of $A$ in $G$.  We then set 
$$q^*(H,G) = \max_{\mathcal{A}_H} q_{\mathcal{A}_H}(H,G),$$ 
where the maximum is taken over all partitions $\mathcal{A}_H$ of $V(H)$.  Thus, $q^*(H,G)$ can be seen as the maximum possible contribution of parts contained in $H$ to the modularity of~$G$, if we only consider partitions of $V(G)$ such that every part is either completely contained in $V(H)$ or does not intersect $V(H)$.

Let $H$ be a connected subgraph of~$G$.  Then, for any partition $\mathcal{A}_H$ of $V(H)$ with $|\mathcal{A}_H| > 1$, such that each part induces a connected subgraph, it is clear that there exists a partition $(X,Y)$ of $V(H)$ into two nonempty sets such that $H[X]$ and $H[Y]$ are both connected, and every element of $\mathcal{A}_H$ is completely contained in either $X$ or $Y$.  Conversely, if $(X,Y)$ is a partition with this property it is immediate that partitions of $X$ and $Y$ can be combined to give a partition of $V(H)$.  For any connected graph $H$, we write $\mathcal{P}(H)$ for the set of all partitions $(X,Y)$ of $V(H)$ into two non-empty sets such that $G[X]$ and $G[Y]$ are both connected.  Since we need only consider partitions in which every part induces a connected subgraph (by Fact \ref{fact:connected}), it follows that
\begin{align}
\!\!\!\!q^*(H,G) = \max \bigg\lbrace\! \Big(\frac{1}{m}e(H) - &\frac{1}{m^2} \vol(H)^2 \Big), \nonumber \\
&\!\!\max_{(X,Y)\in \mathcal{P}(H)}\!\! \Big\lbrace q^*(G[X],G) + q^*(G[Y],G)\Big\rbrace\! \bigg\rbrace,
\label{eqn:split-conn}
\end{align}
again adopting the convention that the maximum, taken over an empty set, is equal to $- \infty$.

By assumption,~$G$ has only $h$ connected induced subgraphs.  We note that, with suitable data structures, we can compute a list of all such subgraphs in time $\mathcal{O}(nh)$.  To enumerate all connected induced subgraphs containing the vertex $v$, we can explore a search tree as follows: we associate the pair $(\{v\},V(G) \setminus \{v\})$ with the root and, on reaching a node associated with the pair $(U,W)$, we select an arbitrary vertex $x \in W$ such that $N(x) \cap U \neq \emptyset$ (if such a vertex exists), and create two child nodes associated with $(U \cup \{x\}, W \setminus \{x\})$ and $(U, W \setminus \{x\})$ respectively.  When this process terminates, the vertex-set of every connected induced subgraph appears as the first element of the tuple for exactly one leaf node.  Repeating the process for each vertex in the graph (after deleting those starting vertices already considered) will produce a list of all connected induced subgraphs.  

From now on we will assume that we have computed a list $H_1,\ldots,H_h$ of all connected induced subgraphs of $G$; without loss of generality we may further assume that these subgraphs are listed in non-decreasing order of their number of vertices.  In particular, this means that there is no connected induced subgraph that is strictly contained in $H_1$, so $\mathcal{P}(H_1) = \emptyset$ and $q^*(H_1,G) = \frac{1}{m}e(H_1) - \frac{1}{m^2} \vol(H_1)^2$.  We can reformulate \eqref{eqn:split-conn} as follows:
\begin{align*}
q^*(H_j,G) = \max \Bigg\lbrace\! \Big(\frac{1}{m}&e(H_j) - \frac{1}{m^2} \vol(H_j)^2 \Big),\\
&\max_{\substack{i < j \\ V(H_i) \subset V(H_j) \\ H_j\setminus V(H_i) \text{ connected}}} \!\!\!\!\!\!\!\!\!\!\!\!\!\!\!\!\Big\lbrace q^*(H_i,G) + q^*(H_j \setminus V(H_i) ,G)\Big\rbrace\! \Bigg\rbrace.
\end{align*}

Note that, if $H_j \setminus V(H_i)$ is connected, then $H_j \setminus V(H_i)$ is $H_{\ell}$ for some $\ell < j$.  Thus, if we know the values $q^*(H_1,G),\ldots,q^*(H_{j-1},G)$, we can compute $q^*(H_j,G)$ in time $\mathcal{O}(j|H_j|)$.  It follows that, by considering the connected subgraphs $H_1,\ldots,H_h$ in order, we can compute $q^*(H)$ for every connected induced subgraph in time $\mathcal{O}(h^2n)$.

Now suppose that~$G$ has connected components $C_1,\ldots,C_{\ell}$, where $V(C_i) = V_i$ for each $i$.  By Fact \ref{fact:connected} (see also Lemma~1.6.2 of \cite{thesis}), we can restrict our attention to partitions $\mathcal{A}$ of $V(G)$ such that every part is completely contained in some $V_i$,
$$q^*(G) = \sum_{i = 1}^{\ell} q^*(C_i,G).$$
Since each connected component $C_i$ is a connected induced subgraph of~$G$, it occurs in the list $H_1,\ldots,H_h$ of connected induced subgraphs.  Thus, once we have computed $q^*(H,G)$ for each connected induced subgraph $H$, we can immediately determine $q^*(G)$ by summing the appropriate values.  Hence the overall time required to compute $q^*(G)$ is $\mathcal{O}(h^2n)$. \qed
\end{proof}

It is known that, if the max leaf number of~$G$ is $c$, then~$G$ is a subdivision of some graph $H$ on at most $4c$ vertices \cite{estivill05}; a graph on $n$ vertices that is a subdivision of such a graph $H$ has at most $2^{4c}n^{(4c)^2}$ connected subgraphs (once we have decided which branch vertices belong to a subgraph, it remains only to decide where to cut each path from one of the chosen branch vertices to one we have not chosen).  Thus, if the max leaf number of~$G$ is bounded by a constant it follows that~$G$ has at most a polynomial number of connected subgraphs, and the following result is an immediate consequence of Theorem~\ref{thm:conn-subgs}.

\begin{cor}
\textsc{Modularity} is in $\XP$ when parameterised by the max leaf number of the input graph~$G$.
\end{cor}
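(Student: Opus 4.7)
The plan is to combine Theorem~\ref{thm:conn-subgs} with the structural characterisation of graphs of bounded max leaf number noted in the excerpt. Concretely, I would first invoke the result of Estivill-Castro et al.~\cite{estivill05}: if the max leaf number of~$G$ is at most $c$, then $G$ is a subdivision of some multigraph $H$ with $|V(H)| \leq 4c$ (and hence $|E(H)| \leq \binom{4c}{2}$, say). Call the vertices of~$G$ corresponding to original vertices of~$H$ the \emph{branch vertices}, and the internally-disjoint paths replacing edges of~$H$ the \emph{subdivided edges}.

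The main step is then to bound the number of connected induced subgraphs of such a graph~$G$ polynomially in $n$, treating $c$ as a constant. Given a connected induced subgraph $S$ of~$G$, I would describe $S$ by (i) the set $B \subseteq V(H)$ of branch vertices contained in $S$, and (ii) for each subdivided edge $P$ of~$G$ (corresponding to an edge of $H$), the intersection $S \cap V(P)$. Since $S$ is connected and each $P$ is an internally-disjoint path with only its two endpoints in $V(H)$, the intersection $S \cap V(P)$ must be either empty or a contiguous sub-path of $P$, and if $S$ uses any internal vertices of $P$ then these form a single subpath of $P$ possibly extended to include one or both endpoints of $P$ (with the endpoints constrained by $B$). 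The number of choices for $B$ is at most $2^{4c}$, and the number of choices for the intersection with each of the at most $\binom{4c}{2}$ paths is at most $\binom{n}{2}+1 = O(n^2)$ (choose the two endpoints of the sub-path within $P$). Thus the total number of connected induced subgraphs is at most $2^{4c} \cdot n^{O(c^2)}$, matching the bound cited in the excerpt.

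Finally, I would plug this bound into Theorem~\ref{thm:conn-subgs}: since $h = n^{O(c^2)}$, the algorithm runs in time $O(h^2 n) = n^{O(c^2)}$, which is polynomial in $n$ for any fixed $c$. This is precisely the definition of $\XP$ when parameterised by the max leaf number, yielding the corollary.

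The main obstacle, such as it is, is the bookkeeping in the combinatorial bound on connected induced subgraphs of a subdivision: one must be careful that the ``intersection with each subdivided path'' description actually uniquely determines $S$ and that connectedness of~$S$ is correctly captured by a choice of contiguous sub-paths whose branch endpoints lie in~$B$. Everything else is a direct quotation of results already established in the excerpt.
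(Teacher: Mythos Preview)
Your proposal is correct and follows essentially the same argument as the paper: invoke the subdivision characterisation from~\cite{estivill05}, bound the number of connected induced subgraphs by $n^{O(c^2)}$ (the paper states $2^{4c}n^{(4c)^2}$), and apply Theorem~\ref{thm:conn-subgs}. One small caveat on the bookkeeping you flagged: when \emph{both} branch endpoints of a subdivided edge $P$ lie in $S$, the intersection $S \cap V(P)$ need not be a single contiguous sub-path---it can be a prefix from one end together with a disjoint suffix from the other, with $S$ remaining connected via other subdivided edges---but this still gives at most $O(n^2)$ choices per path, so your bound survives unchanged.
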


We conjecture that this result is not optimal, and that \textsc{Modularity} is in fact in $\FPT$ with respect to this parameterisation.

\section{Hardness results}
\label{sec:hardness}

In this section we complement our positive result about the FPT approximability of the problem parameterised by treewidth by demonstrating that computing the exact value of the maximum modularity is hard even in a more restricted setting.

\begin{thm}\label{thm:hard}
\textsc{Modularity}, parameterised simultaneously by the pathwidth and the size of a minimum feedback vertex set for the input graph, is $\W[1]$-hard.
\end{thm}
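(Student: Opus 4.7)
The plan is to construct a polynomial-time reduction from a suitable W[1]-hard variant of \textsc{Equitable Connected Partition} (ECP) to \textsc{Modularity}, as hinted in the introduction. Given an instance $(G,k)$ of ECP, parameterised by $\pw(G)+\fvs(G)$, the reduction produces an auxiliary graph $G'$ and a threshold $q$ such that $\q(G')\geq q$ if and only if $G$ admits an equitable connected partition into $k$ parts, while ensuring that $\pw(G')$ and $\fvs(G')$ remain bounded by functions of $\pw(G)$, $\fvs(G)$, and $k$. The starting problem must be a specialisation of ECP (probably with some uniformity assumption, e.g.\ on part size or on the set of permitted parts) for which the known W[1]-hardness with respect to pathwidth plus feedback vertex set still holds.

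The construction attaches carefully-sized ``weight gadgets'' (such as pendant paths or stars, which keep the graph sparse) to each vertex of $G$. These gadgets play two roles. First, they make the volume of each original vertex together with its attached gadget uniform, so that the degree-tax term $\frac{1}{4m^2}\sum_A \vol(A)^2$ is minimised precisely when the parts of the partition have equal total volume; this forces an equipartition of the original vertices. Second, by choosing gadget sizes so that the marginal gain in edge contribution from combining two super-vertices into one part is outweighed by the degree-tax penalty exactly when more than $k$ parts would otherwise be preferred, we drive the optimal number of parts to be exactly~$k$. Facts \ref{fact:connected} and \ref{fact:singlevertex} allow us to restrict attention to partitions whose parts induce connected subgraphs and contain no isolated singletons, so combined with the gadget design each optimal part is the union of super-vertices whose original vertices induce a connected subgraph of~$G$.

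The correctness argument has two directions. In the forward direction, given an equitable connected partition of $G$ into $k$ parts, extending each part by absorbing the gadgets of its vertices produces a partition of $G'$ whose edge contribution captures every gadget edge and whose degree tax attains its minimum because all volumes are equal, yielding modularity at least~$q$. In the reverse direction, one argues that any partition of $G'$ with modularity at least $q$ (i) must keep each gadget together with its host vertex, else a local rearrangement would strictly increase $q_\AA$; (ii) must use exactly $k$ parts of equal total volume, else the degree tax is too large or the edge contribution too small; and (iii) hence projects to an equitable connected partition of~$G$. The threshold $q$ is computed by plugging the target partition into the modularity formula, using the known number of edges and the uniform volumes arising from the gadgets. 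Finally, one verifies that appending tree-like gadgets affects the parameters only mildly: a path decomposition of $G$ can be extended by appending small bags for each gadget (adding at most a constant per gadget), and since gadgets introduce no new cycles, $\fvs(G')\leq \fvs(G)$.

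The main obstacle is the precise calibration of the weight gadgets: the modularity function's trade-off between the (concave) benefit of grouping edges together and the (quadratic) penalty of unbalanced volumes must be tuned so that the global optimum of $\q(G')$ is attained at exactly the target number of parts and at an equipartition, rather than at some spurious configuration exploiting the extra flexibility that \textsc{Modularity} enjoys over ECP. Avoiding dense gadgets (which would be a natural tool for forcing modularity-based behaviour but would blow up pathwidth or fvs) while still pinning down the correct partition structure is the subtle combinatorial heart of the reduction; likely it forces the use of a specialisation of ECP with additional regularity (for instance, a bounded-degree or pre-specified part-size version) so that a uniform choice of gadget suffices.
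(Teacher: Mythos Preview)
Your high-level plan---reduce from a W[1]-hard variant of ECP, attach sparse gadgets, argue that the modularity-optimal partition of the gadgeted graph projects back to an equitable connected partition---matches the paper's strategy. However, the concrete mechanism you propose differs from the paper's in ways that matter, and your sketch does not yet contain the idea that actually makes the reduction go through.

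The paper does \emph{not} attach gadgets to every vertex to equalise volumes. Instead it starts from \textsc{Anchored ECP}, in which the instance comes with $r$ distinguished anchor vertices that (after stripping pendants) are precisely the branch vertices of a subdivided cubic graph. The construction then (i) attaches $\alpha$ pendant leaves to each \emph{anchor} only, (ii) adds a perfect matching on the anchors, and (iii) throws in $\beta$ isolated edges. The purpose of (i) is to make any part containing two anchors have far too much volume, and any part containing no anchor far too little; so an optimal part must contain exactly one anchor, which is what fixes the number of parts to~$r$. The matching in (ii) bumps the boundary of a one-anchor part from $3$ to exactly $4$, and the isolated edges in (iii) are pure padding so that one can tune $m$ to satisfy $\sqrt{2m}=s+\alpha+1$ exactly. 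The analysis is carried out not via direct edge-contribution/degree-tax bookkeeping but through the ``per-unit modularity deficit'' $f_m(B)=\partial(B)/\vol(B)+\vol(B)/(2m)$: a short calculus lemma shows that $f_m(B)\geq 2\sqrt{2/m}$ with equality iff $\partial(B)=4$ and $\vol(B)=2\sqrt{2m}$, and the gadget parameters are chosen so that this equality is attainable precisely by the one-anchor, size-$s$ parts of an equitable connected partition.

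Your proposal leaves open exactly the step you yourself flag as the obstacle: with gadgets on \emph{every} vertex and no anchors, there is no evident mechanism forcing the optimum to use exactly $k$ parts rather than, say, $k\pm 1$ near-balanced parts, and a purely volume-based argument will not distinguish these. The paper sidesteps this by letting the anchor structure (one anchor per part, boundary forced to $4$) do the counting, and by reducing the optimisation to a single scalar equality $f_m(B)=2\sqrt{2/m}$ that pins down both $\partial(B)$ and $\vol(B)$ simultaneously.
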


Our proof of this result relies on the hardness of the following problem.

\begin{framed}
\noindent
\textsc{Equitable Connected Partition (ECP)}\\
\textit{Input:} A graph $G = (V,E)$ and $r \in \mathbb{N}$.\\
\textit{Question:} Is there a partition of $V$ into $r$ classes $V_1,\ldots,V_r$ such that $|V_i| - |V_j| \leq 1$ for all $1 \leq i < j \leq r$, and the induced subgraph $G[V_i]$ is connected for each $i \in 1,\ldots,r$?
\end{framed}

The parameterised complexity of ECP was investigated thoroughly in \cite{enciso09}.  Among other results, the problem is shown to be $\W[1]$-hard even when parameterised simultaneously by $r$, $\pw(G)$ and $\fvs(G)$.  In proving this hardness result, the authors implicitly consider the following variation of ECP.

\begin{framed}
\noindent
\textsc{Anchored Equitable Connected Partition (AECP)}\\
\textit{Input:} A graph $H = (V_H,E_H)$, and a set of distinguished \emph{anchor vertices} $a_1,\ldots,a_r \in V$.\\
\textit{Question:} Is there a partition of $V_H$ into $r$ classes $V_1,\ldots,V_r$ such that $a_i \in V_i$ for all $i$, $\left||V_i| - |V_j|\right| \leq 1$ for all $1 \leq i < j \leq r$, and the induced subgraph $G[V_i]$ is connected for each $i \in 1,\ldots,r$?
\end{framed}
From the proof of \cite[Theorem 1]{enciso09} we can extract the following statement about the hardness of AECP.

\begin{lma}[\cite{enciso09}, implicit in proof of Theorem 1]\label{lma:aecp-hard}
AECP is $\W[1]$-hard, parameterised simultaneously by $\pw(H)$ and $\fvs(H)$, even if the following conditions hold simultaneously:
\begin{enumerate}
\item $H$ is connected;
\item the graph $H'$ obtained from $H$ by deleting all vertices of degree one is a subdivision of a 3-regular graph $\tilde{H}$;
\item the branch vertices of $H'$ (i.e.\ vertices of $\tilde{H}$) are precisely the anchor vertices $a_1,\ldots,a_r$;
\item $r\geq 4$ is even and divides $|V_H|$;
\item $H \setminus \{a_1,\ldots,a_r\}$ is a disjoint union of isolated vertices and paths with pendant edges.
\end{enumerate}
\end{lma}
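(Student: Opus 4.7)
The plan is to revisit the construction used in the proof of \cite[Theorem 1]{enciso09} rather than building a new reduction from scratch. That proof reduces from a known $\W[1]$-hard problem (such as \textsc{Multicolored Clique}) to \textsc{ECP} by assembling a family of gadgets connected along a backbone of high-degree ``skeleton'' vertices, in such a way that the resulting parameters $\pw$ and $\fvs$ are bounded by a function of the source parameter. My first step would be to isolate, inside that construction, the vertex set that plays the role of the skeleton: in any feasible equitable connected partition these vertices must land in pairwise distinct parts, because the gadgets are sized and connected precisely so that no two of them can share a class without either disconnecting a class or breaking equitability.

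Next I would argue that declaring these skeleton vertices to be the anchors $a_1,\dots,a_r$ gives an equivalent instance of \textsc{AECP}. In one direction, any solution to the \textsc{AECP} instance is obviously a solution to the underlying \textsc{ECP} instance. In the other direction, an \textsc{ECP} solution always separates the skeleton vertices into distinct classes (by the observation above), so by permuting the part labels we may assume each $a_i$ lies in $V_i$. Thus the reduction of \cite{enciso09} also witnesses $\W[1]$-hardness of \textsc{AECP} with the parameterisation $\pw(H)+\fvs(H)$.

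The remaining work is structural bookkeeping to match conditions (1)--(5). Connectivity of $H$ (condition (1)) follows because the gadgets are joined through the skeleton. Conditions (2) and (3) reflect the fact that the gadgets in the \cite{enciso09} construction are essentially paths (subdivided edges) attached between skeleton vertices, possibly decorated with pendant vertices that encode sizes; after deleting the degree-one pendants one is left with the subdivided backbone, whose branch vertices are exactly the skeleton vertices, and this backbone is 3-regular after suppression. Condition (5) similarly mirrors the gadget structure: once the skeleton vertices are removed, each gadget breaks into a path (possibly with pendant edges along it) or an isolated vertex. Condition (4) on the parity and divisibility of $r$ and on $r\ge 4$ can be guaranteed by a routine padding argument: attach a bounded number of auxiliary gadgets of the same type to adjust $r$ modulo $2$ and to enforce $r\mid |V_H|$, without changing the answer or more than a constant increase in $\pw$ and $\fvs$.

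The main obstacle is verifying condition (2), since it imposes 3-regularity on the suppressed backbone; this has to be checked gadget by gadget against the construction of \cite{enciso09} and, where a skeleton vertex happens to have backbone-degree other than three, corrected by absorbing any extra incidences into pendant structures counted under condition (5) or by a small redesign of the affected gadget. All such modifications are local, preserve the correctness of the reduction, and have bounded effect on both $\pw(H)$ and $\fvs(H)$, so the $\W[1]$-hardness is maintained.
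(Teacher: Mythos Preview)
The paper does not actually prove this lemma: it is stated without proof and attributed as ``implicit in proof of Theorem 1'' of \cite{enciso09}. So there is no argument in the paper to compare your proposal against; the authors simply extract the statement and rely on the reader to check the cited construction.

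Your plan is the natural way to substantiate that extraction, and its overall shape is right: identify the skeleton vertices in the \cite{enciso09} reduction as anchors, argue that any ECP solution separates them so that ECP and AECP are equivalent on these instances, and then verify the structural conditions. That said, what you have written is still a plan rather than a proof. The correctness hinges entirely on concrete facts about the specific gadgets in \cite{enciso09} (that the backbone after pendant removal is a subdivision of a $3$-regular graph, that removing anchors leaves only paths with pendants and isolated vertices, and that the parity/divisibility of $r$ can be arranged), and you yourself flag condition (2) as the main obstacle requiring gadget-by-gadget checking and possibly local redesign. Until those checks are actually carried out against the construction, the argument is not complete. In short: your approach matches what the paper implicitly asks the reader to do, but neither you nor the paper supplies the detailed verification.
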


In the proof of Theorem \ref{thm:hard}, it is useful to analyse the `per unit modularity {\Fc deficit}' $f_m(B)$ of vertex subsets $B$.  {\Fc For $m\geq 1$ and vertex subset $B$ with $\vol(B)\geq 1$ we define \begin{equation}\label{eq.perunitdeficit}f_m(B)=\frac{\partial(B)}{\vol(B)}+\frac{\vol(B)}{2m}. \end{equation}

Intuitively, minimising the per unit modularity deficit $f_m(B)$ maximises the modularity (see \eqref{eq.exact} for a precise statement). Hence, loosely,}
the following lemma says that if we are restricted to parts $B$ with $\delta(B)=4$ the modularity maximising volume is $\vol(B)=2\sqrt{2m}$. Moreover, while it would usually be better to take parts with $\delta(B)<4$ these parts are actually worse (i.e.\ higher $f_m(B)$ value) if their volumes are too big or too small. The function $f_m(B)$ plays a similar role to the \emph{n-cost} in Proposition~1 of~\cite{treelike}.

\begin{lma}\label{lem.maxf4} Let $m\geq 1$, $\vol(B)\geq 1$ and let $f_m(B)$ be as defined in \eqref{eq.perunitdeficit}.
Then the following properties hold:

\noindent
\emph{0:} if $\partial(B)=0$ and $\vol(B)>4 
\sqrt{2m}$ then $f_m(B)>2\sqrt{2/m}$.\\
\emph{1:} if $\partial(B)=1$ and $\vol(B)>3.7321 
\sqrt{2m}$ or 
$\vol(B)<0.2679
\sqrt{2m}$ then $f_m(B)>2\sqrt{2/m}$.\\
\noindent\emph{2:} if $\partial(B)=2$ and $\vol(B)>3.4143 
\sqrt{2m}$ or 
$\vol(B)<0.5857
\sqrt{2m}$ then $f_m(B)>2\sqrt{2/m}$.\\
\noindent\emph{3:} if $\partial(B)=3$ and $\vol(B)>3\sqrt{2m}$ or $\vol(B)<\sqrt{2m}$ then $f_m(B)>2\sqrt{2/m}$.\\ 
\noindent\emph{4:} if $\partial(B)=4$ and $\vol(B)\geq 2\sqrt{2m}$ then $f_m(B) \geq 2\sqrt{2/m}$ with equality iff $\vol(B)=2\sqrt{2m}$.\\
\noindent\emph{5:} if $\partial(B)\geq 5$ then $f_m(B)>2\sqrt{2/m}$.
\end{lma}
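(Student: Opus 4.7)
The key observation is that, for any fixed integer $d = \partial(B) \geq 0$, the quantity $f_m(B)$ depends on $B$ only through $x := \vol(B)$, and the function
$$g_d(x) = \frac{d}{x} + \frac{x}{2m} \qquad (x>0)$$
is strictly convex (its second derivative $2d/x^3$ is positive when $d>0$, and when $d=0$ it is linear and increasing). If $d \geq 1$, elementary calculus (or the AM--GM inequality) shows that $g_d$ attains its unique minimum at $x^* = \sqrt{2md}$, with minimum value $g_d(x^*) = \sqrt{2d/m}$.

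I would dispose of the easy cases first. For case~5, note that if $d \geq 5$ then $g_d(x) \geq \sqrt{2d/m} \geq \sqrt{10/m} > \sqrt{8/m} = 2\sqrt{2/m}$ for every $x>0$. For case~4, when $d=4$ the minimum value is exactly $\sqrt{8/m} = 2\sqrt{2/m}$ and is attained only at $x^* = \sqrt{8m} = 2\sqrt{2m}$; strict convexity then gives $g_4(x) > 2\sqrt{2/m}$ for all $x > 2\sqrt{2m}$, which is the statement required.

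For cases~0--3 the minimum value $\sqrt{2d/m}$ of $g_d$ is strictly smaller than $2\sqrt{2/m}$, so the set $\{x>0 : g_d(x) > 2\sqrt{2/m}\}$ is the complement of a closed interval $[x_-,x_+]$. To locate the endpoints I would solve the equation $g_d(x)=2\sqrt{2/m}$: multiplying through by $2mx$ gives the quadratic
$$x^2 - 4\sqrt{2m}\,x + 2md \;=\; 0,$$
whose roots are
$$x_\pm \;=\; 2\sqrt{2m}\;\pm\;\sqrt{2m(4-d)} \;=\; \sqrt{2m}\bigl(2\pm\sqrt{4-d}\bigr).$$
Substituting $d=0,1,2,3$ yields the thresholds $4\sqrt{2m}$ (with the other root $0$ outside the allowed range), $(2\pm\sqrt{3})\sqrt{2m}$, $(2\pm\sqrt{2})\sqrt{2m}$ and $3\sqrt{2m}, \sqrt{2m}$, whose decimal approximations match the numbers quoted in cases~0--3. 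By convexity of $g_d$, for each such $d$ we have $g_d(x) > 2\sqrt{2/m}$ precisely when $x < x_-$ or $x > x_+$, completing the proof.

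The argument is essentially a one-variable convex optimisation; there is no conceptual obstacle, and the main thing to be careful about is verifying that the closed-form roots $\sqrt{2m}(2\pm\sqrt{4-d})$ agree numerically with the thresholds $0.2679,0.5857,3.4143,3.7321$ stated in the lemma (which they do, since $2-\sqrt{3}\approx 0.2679$, $2-\sqrt{2}\approx 0.5858$, etc.).
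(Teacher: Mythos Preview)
Your proposal is correct and follows essentially the same route as the paper: both reduce to the quadratic $x^2 - 4\sqrt{2m}\,x + 2md = 0$ and identify the thresholds as $\sqrt{2m}(2\pm\sqrt{4-d})$. The only cosmetic difference is that you invoke convexity of $g_d$ to deduce that the super-level set is the complement of an interval, whereas the paper completes the square to obtain $(\vol(B)/\sqrt{2m}-2)^2 \geq 4-\ell$ and then factors as a difference of two squares; the content is the same.
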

\begin{proof} 

Fix a vertex set $B$ with a constant number, $\ell$, of edges to the rest of the graph (so $\partial(B)=\ell$). {\Fc For $\ell=0$ one can check that directly that if $\vol(B)>4\sqrt{2m}$ then $f_m(B)>2\sqrt{2/m}$ which establishes part 0.} Thus we may assume $\ell\geq 1$. By definition,
$f_m(B)=\ell/\vol(B)+\vol(B)/(2m)$ and so

\begin{equation}\label{eq.forFA5}f_m(B)\geq 2\sqrt{\frac{2}{m}}\;\; \Leftrightarrow \;\; \left( \frac{\vol(B)}{\sqrt{2m}} -2\right)^2 \geq 4-\ell.\end{equation}

Hence for $\ell=4$ we get equality iff $\vol(B)=2\sqrt{2m}$ which immediately implies part 4 of the lemma. Also for $\ell\geq 5$ the RHS of~\eqref{eq.forFA5} is negative which gives part 5 of the lemma. It remains to prove parts 1, 2 and 3 of the lemma.

Now suppose $\ell\in\{1,2,3\}$ then $\sqrt{4-\ell}$ is real and so we may rearrange as the difference of two squares,

$$f_m(B)>2\sqrt{\frac{2}{m}} \;\; \Leftrightarrow \;\; \left( \frac{\vol(B)}{\sqrt{2m}} -2-\sqrt{4-\ell}\right)\left( \frac{\vol(B)}{\sqrt{2m}} -2+\sqrt{4-\ell}\right) >0.$$

Observe $f_m(B)>2\sqrt{2/m}$ if the terms in the product above are either both positive or both negative. Hence $f_m(B)>2\sqrt{2/m}$ if 

$$\vol(B)>2\sqrt{2m}\left(1+\tfrac{1}{2}\sqrt{4-\ell}\right)\;\;\;\;\;\; \mbox{ or } \;\;\;\;\;\; \vol(B)<2\sqrt{2m}\left(1-\tfrac{1}{2}\sqrt{4-\ell}\right).$$

Therefore for $\ell=1$ we get $f_m(B)>2\sqrt{2m}$ if $\vol(B)>\sqrt{2m}(2+\sqrt{3})$ and $2+\sqrt{3}\sim 3.7320508 \leq  3.7321$. Likewise, keeping $\ell=1$, $f_m(B)>2\sqrt{2m}$ if $\vol(B)<\sqrt{2m}(2-\sqrt{3})$ and $2-\sqrt{3}\sim 0.26794919 \geq 0.2679$. This establishes part 1 of the lemma. The parts 2 and 3 follow in the same fashion. \qed
\end{proof}

We are now ready to prove Theorem \ref{thm:hard}.

\begin{proof}[Proof of Theorem \ref{thm:hard}]
We give a reduction from AECP.  Suppose that $(H,\{a_1,\ldots,a_r\})$ is the input to an instance of AECP; we will describe how to construct a graph~$G$, where $\pw(G)$ and $\fvs(G)$ are both bounded by a function of $r$, together with an explicit $q_0 \in (0,1)$ such that $(G,q_0)$ is a yes-instance for \textsc{Modularity} if and only if $(H,\{a_1,\ldots,a_r\})$ is a yes-instance for AECP.

We may assume without loss of generality that our instance of AECP satisfies all of the conditions of Lemma \ref{lma:aecp-hard}.

We define a new graph~$G$, obtained from $H$ by adding the following (see Figure~\ref{fig:w1}):
\begin{itemize}
\item $\alpha$ new leaves adjacent to each anchor vertex $a_1,\ldots,a_r$,
\item $\beta$ isolated edges disjoint from~$G$, and
\item an arbitrary perfect matching on the anchor vertices $a_1,\ldots,a_r$,
\end{itemize} 
where the values of $\alpha$ and $\beta$ will be determined later.  
\begin{figure}[ht!]
  \begin{center}
    \begin{tikzpicture}[scale=0.7]
		\tikzstyle{vertex}=[circle,fill=black, draw=none, minimum size=4pt,inner sep=2pt]

%   \begin{scope}[shift={(-5,0)}]
		\node[vertex] (aB) at (0,0){};
		\node[vertex] (aL) at (-2,2){};
		\node[vertex] (aT) at (0,4){};
		\node[vertex] (aR) at (2,2){};

		\node[above] at (0,4) {$a_1$};
		\node[left] at (-2,2) {$a_2$};
		\node[right] at (0,0) {$a_4$};
		\node[right] at (2,2) {$a_3$};

		\draw (aT)--(aL)--(aB)--(aR)--(aT);
		\draw (aL)--(aR);
		\draw (aT) to [out=190,in=170, distance=4cm] (aB);

		\tikzstyle{vertex}=[circle,fill=black, draw=none, inner sep=1pt]
		\node[vertex] (vLR1) at (-0.7,2){};
		\node[vertex] (vLR2) at (0.7,2){};
		\node[vertex] (vBR) at (1,1){};
		
		\node[vertex] (vTL1) at (-1,3){};
		\node[vertex] (vTL2) at (-0.5,3.5){};
		\node[vertex] (vTL3) at (-1.5,2.5){};

		\node[vertex] (vBp1) at (-0.1,-0.75){};
		\node[vertex] (vBp2) at (0.1,-0.75){};
		\draw (vBp1)--(aB)--(vBp2);
		\node[vertex] (vBRp) at (1,0.25){};
	  	\draw (vBRp) -- (vBR);

		\node[vertex] (vLR1p) at (-0.7,1.25){};
		\draw (vLR1p) -- (vLR1);

		\node[] at (3.7,2) {\huge $\leadsto$};
 %  \end{scope}

    	%__________________________
      	%SECOND PICTURE
    	\begin{scope}[shift={(7.9,0)}]
  	\tikzstyle{vertex}=[circle,fill=black, draw=none, inner sep=1pt]

      \begin{scope}[shift={(2,2)}]
   	\node[vertex,blue] (ref2) at (0,0){};
		\node[vertex,blue] (a2) at (-0.3,-0.75){};
		\node[vertex,blue] (b2) at (-0.1,-0.75){};
		\node[vertex,blue] (c2) at (0.3,-0.75){};
		\node[blue] at (0.1,-0.75) {\tiny ...};
		
		\draw[blue] (ref2)--(a2)--(ref2)--(b2)--(ref2)--(c2);
		\draw [decorate, decoration={brace,amplitude=2pt,mirror},xshift=0pt,yshift=0pt, blue] (-0.4,-0.9) -- (0.4,-0.9) node [black,midway,below,xshift=0cm]  {\textcolor{blue}{\footnotesize $\alpha$}};
		\end{scope}

      \begin{scope}[shift={(-2,2)}]
		\node[vertex,blue] (ref3) at (0,0){};
		\node[vertex,blue] (a3) at (-0.3,-0.75){};
		\node[vertex,blue] (b3) at (-0.1,-0.75){};
		\node[vertex,blue] (c3) at (0.3,-0.75){};
		\node[blue] at (0.1,-0.75) {\tiny ...};
		
		\draw[blue] (ref3)--(a3)--(ref3)--(b3)--(ref3)--(c3);
		\draw [decorate, decoration={brace,amplitude=2pt,mirror},xshift=0pt,yshift=0pt, blue] (-0.4,-0.9) -- (0.4,-0.9) node [black,midway,below,xshift=0cm]  {\textcolor{blue}{\footnotesize $\alpha$}};
		\end{scope}

		\begin{scope}[shift={(0,4)}]
		\node[vertex,blue] (ref4) at (0,0){};
		\node[vertex,blue] (a4) at (-0.3,-0.75){};
		\node[vertex,blue] (b4) at (-0.1,-0.75){};
		\node[vertex,blue] (c4) at (0.3,-0.75){};
		\node[blue] at (0.1,-0.75) {\tiny ...};
		
		\draw[blue] (ref4)--(a4)--(ref4)--(b4)--(ref4)--(c4);
		\draw [decorate, decoration={brace,amplitude=2pt,mirror},xshift=0pt,yshift=0pt, blue] (-0.4,-0.9) -- (0.4,-0.9) node [black,midway,below,xshift=0cm]  {\textcolor{blue}{\footnotesize $\alpha$}};
		\end{scope}

		\node[vertex,blue] (ref) at (0,0){};
		\node[vertex,blue] (a) at (-0.3,-0.75){};
		\node[vertex,blue] (b) at (-0.1,-0.75){};
		\node[vertex,blue] (c) at (0.3,-0.75){};
		\node[blue] at (0.1,-0.75) {\tiny ...};
		
		\draw[blue] (ref)--(a)--(ref)--(b)--(ref)--(c);
		\draw [decorate, decoration={brace,amplitude=2pt,mirror},xshift=0pt,yshift=0pt, blue] (-0.4,-0.9) -- (0.4,-0.9) node [black,midway,below,xshift=0cm]  {\textcolor{blue}{\footnotesize $\alpha$}};

		\tikzstyle{vertex}=[circle,fill=black, draw=none, minimum size=4pt,inner sep=2pt]
		\node[vertex] (aB) at (0,0){};
		\node[vertex] (aL) at (-2,2){};
		\node[vertex] (aT) at (0,4){};
		\node[vertex] (aR) at (2,2){};

		\node[above] at (0,4) {$a_1$};
		\node[left] at (-2,2) {$a_2$};
		\node[right] at (0,0) {$a_4$};
		\node[right] at (2,2) {$a_3$};

		\draw (aT)--(aL)--(aB)--(aR)--(aT);
		\draw (aL)--(aR);
		\draw (aT) to [out=190,in=170, distance=4cm] (aB);

		\draw[blue] (aT) to [out=200,in=70] (aL); 
		\draw[blue] (aR) to [out=200,in=70] (aB);

		\tikzstyle{vertex}=[circle,fill=black, draw=none, inner sep=1pt]
		\node[vertex] (vLR1) at (-0.7,2){};
		\node[vertex] (vLR2) at (0.7,2){};
		\node[vertex] (vBR) at (1,1){};
		
		\node[vertex] (vTL1) at (-1,3){};
		\node[vertex] (vTL2) at (-0.5,3.5){};
		\node[vertex] (vTL3) at (-1.5,2.5){};

		\node[vertex] (vBp1) at (-0.6,-0.75){};
		\node[vertex] (vBp2) at (-0.8,-0.75){};
		\draw (vBp1)--(aB)--(vBp2);
		\node[vertex] (vBRp) at (1,0.25){};
	  	\draw (vBRp) -- (vBR);

		\node[vertex] (vLR1p) at (-0.7,1.25){};
		\draw (vLR1p) -- (vLR1);
		
		\node[rotate=90,blue] at (3.83,2.4) {\tiny$...$};

      \begin{scope}[shift={(3.515,3.32)}]
            \node[vertex,blue] (vL) at (0,0){};
            \node[vertex,blue] (vR) at (0.65,0){};
            \draw[blue] (vL)--(vR);
      \end{scope}

      \begin{scope}[shift={(3.515,3.1)}]
            \node[vertex,blue] (vL) at (0,0){};
            \node[vertex,blue] (vR) at (0.65,0){};
            \draw[blue] (vL)--(vR);
      \end{scope}

      \begin{scope}[shift={(3.515,2.88)}]
            \node[vertex,blue] (vL) at (0,0){};
            \node[vertex,blue] (vR) at (0.65,0){};
            \draw[blue] (vL)--(vR);
      \end{scope}

      \begin{scope}[shift={(3.515,2.66)}]
            \node[vertex,blue] (vL) at (0,0){};
            \node[vertex,blue] (vR) at (0.65,0){};
            \draw[blue] (vL)--(vR);
      \end{scope}

      \begin{scope}[shift={(3.515,2)}]
      		\node[vertex,blue] (vL) at (0,0){};
      		\node[vertex,blue] (vR) at (0.65,0){};
      		\draw[blue] (vL)--(vR);
		\end{scope}
		
      \draw [decorate, decoration={brace,amplitude=4pt},xshift=-4pt,yshift=0pt, blue] (4.5,3.35) -- (4.5,2.15) node [black,midway,xshift=0.05cm,right]  {\textcolor{blue}{\footnotesize $\beta$}};
      \end{scope}

    \end{tikzpicture}\vspace{-5mm}
  \end{center}
\caption{Possible input graph $H$ with anchors $a_1,a_2,a_3,a_4$ and the graph~$G$ constructed from it by adding $\alpha$ new leaves adjacent to each anchor, $\beta$ isolated edges and a perfect matching between the anchors.\vspace{-4mm}}
\label{fig:w1}
\end{figure}
\vspace{6pt}

{\Fc The idea of the construction is that the $\alpha$ edges help ensure that each anchor vertex is in a separate part of any modularity optimal partition and the $\beta$ edges allow us to get the numbers to work at the end of the proof.} Notice that, even with these modifications, $G \setminus \{a_1,\ldots,a_r\}$ is still a disjoint union of isolated vertices and paths with pendant edges; hence $\pw(G) \leq r+1$ and $\fvs(G) \leq r$.  We set $m=|E(G)|$ so $m=|E(H)| + \alpha r+\beta+r/2$.

Define our instance of \textsc{Modularity} to be $(G,q_0)$, where
\begin{equation*}
q_0 = 1-\frac{\beta}{m^2}-\frac{2\sqrt{2}(m-\beta)}{m^{3/2}}.
\end{equation*}

We now argue that $(G,q_0)$ is a yes-instance if and only if $(H,\{a_1,\ldots,a_r\})$ is a yes-instance for AECP. Recall that 
\begin{equation*}
\q(G)= 1-\min_\AA \sum_{A\in \AA}\bigg( \frac{\partial(A)}{2m}+ \frac{\vol(A)^2}{4m^2}\bigg).
\end{equation*}

\noindent
and that the partition $\AA$ which achieves the minimum in the expression above is exactly the modularity maximal $\AA$. In any modularity optimal partition, $\AA$, each isolated edge will form its own part: this follows from Facts~\ref{fact:singlevertex} %\footnote{Here, only need weaker 'pendant' vertex Fact of Brandes} 
and~\ref{fact:connected}. Write $V'$ for vertices of~$G$ without the vertices supporting the $\beta$ isolated edges, and let the minimisation be over $\AA'$ which are vertex partitions of $V'$. We then have \begin{equation*}
\q(G)= 1-\frac{\beta}{m^2}-\min_{\AA'} \sum_{A\in \AA'} \frac{\partial(A)}{2m}+ \frac{\vol(A)^2}{4m^2}.
\end{equation*}

\noindent
Rearranging, we see that \begin{eqnarray}
\notag 1-\frac{\beta}{m^2} - \q(G') 
&=&  \frac{m-\beta}{m}  \min_{\AA'} \sum_{A\in \AA'} \frac{\vol(A)}{2(m-\beta)} \left(\frac{\partial(A)}{\vol(A)}+ \frac{\vol(A)}{2m}\right)\\
\label{eq.exact} &=&  \frac{m-\beta}{m} \min_{\AA'}   \sum_{A\in \AA'} \frac{\vol(A)}{2(m-\beta)} f_m(A)\\
\label{eq.upperbound} &\geq &  \frac{m-\beta}{m} \min_{A\subseteq V'} f_m(A).
\end{eqnarray}
The last inequality holds because $\sum_A \vol(A)=2(m-\beta)$ and so~\eqref{eq.exact} is a weighted sum of the $f_m(A)$ with total weight one. This, together with the fact that no $A$ has zero volume, also implies that \eqref{eq.exact}$\geq$\eqref{eq.upperbound} with equality if and only if $f_m(A)=\min_{B \subset V'} f_m(B)$ for every $A\in \AA'$.

Note that, since $\AA'$ is the restriction of some modularity optimal partition $\AA$ to a connected component of~$G$, we may assume that, for all $A\in \AA'$, $G[A]$ is connected.  Moreover, if $v$ is a pendant vertex adjacent to $u$ then $u$ and $v$ are in the same part in $\AA'$; we call a partition with this last property (or, abusing notation, a set that would not violate this condition in a partition) `pendant-consistent'.

We now make the following claim, writing $s = |H|/r$ for the desired part size in our instance of AECP.
 
\begin{claim}\label{clm:atonce} 
Suppose that $\alpha > 32|E(H)|^2$ and that we have $\sqrt{2m} = s + \alpha + 1$.  Then:
\begin{enumerate}[label=\alph*)]
\item
for any connected, pendant-consistent set $B\subseteq V'$ we have $f_m(B)\geq 2\sqrt{2/m}$, and if $f_m(B) = 2\sqrt{2/m}$ then $B$ contains exactly one anchor and $\vol(B)=2\sqrt{2m}$;
\item if $(H,\{a_1,\ldots,a_r\})$ is a yes-instance, then there is a vertex partition $\AA'$ of $V'$ so that $f_m(A)=2\sqrt{2/m}$ for all $A \in \AA'$;
\item if there is a vertex partition $\AA'=\{A_1, \ldots, A_r\}$ of $V'$ so that for all $A_i\in \AA$, $f_m(A_i)=2\sqrt{2/m}$, $A$ is pendant-consistent and $G[A]$ is connected for all $A \in \AA'$, then $(H,\{a_1,\ldots,a_r\})$ is a yes-instance.
\end{enumerate}
\end{claim}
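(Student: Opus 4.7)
The plan for part (a) is a three-case split on the number $k$ of anchors contained in $B$, applying Lemma~\ref{lem.maxf4} in each. The governing fact is that the hypothesis $\alpha > 32|E(H)|^2$ forces $\alpha \gg |V(H)|$, and since $\sqrt{2m} = s + \alpha + 1 \sim \alpha$, the $\alpha$ leaves at each anchor completely dominate volumes.

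If $k = 0$, pendant-consistency excludes all anchor-leaves from $B$, so $B$ is a connected subsegment of a single path of $H'$ together with its pendants, giving $\partial(B) = 2$ and $\vol(B) \leq 3|V(H)| < 0.5857\sqrt{2m}$; parts~1--2 of Lemma~\ref{lem.maxf4} then give $f_m(B) > 2\sqrt{2/m}$ strictly. If $k = 1$ with $a_i \in B$, pendant-consistency drags in all $\alpha$ leaves of $a_i$ together with its $H$-pendants, and since $B$ is connected and may not contain a second anchor, it must terminate inside each of the three paths of $H'$ meeting $a_i$; thus $\partial(B) = 3 + 1 = 4$ (three path cuts plus the matching edge), and a direct degree count yields $\vol(B) = 4 + 2\alpha + 2(|B \cap V(H)| - 1)$. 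Part~4 of Lemma~\ref{lem.maxf4} then gives $f_m(B) = 2\sqrt{2/m}$ precisely when $\vol(B) = 2\sqrt{2m}$, which translates to $|B \cap V(H)| = s$. If $k \geq 2$, each anchor with its leaves contributes at least $2\alpha + 4$ to the volume, so $\vol(B) \geq 4\alpha$; since $\sqrt{2m} \sim \alpha$, this places $\vol(B)$ beyond the upper endpoint of every window in parts~0--4, and part~5 handles $\partial(B) \geq 5$. The subcase $B = V'$ (with $\partial(B) = 0$) uses $r \geq 4$: $\vol(V') = 2(|E(H)| + \alpha r + r/2) \geq 2\alpha r \geq 8\alpha > 4\sqrt{2m}$, so part~0 applies.

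For part (b), starting from an AECP solution $V_1, \ldots, V_r$ of $H$, I would set $A_i = V_i \cup \{\alpha \text{ leaves of } a_i\}$. Pendant-consistency is automatic, and the $k=1$ computation above reduces $\vol(A_i) = 2\sqrt{2m}$ to $e_H(V_i) = s - 1$, i.e.\ $H[V_i]$ is a tree. This I would establish by a cyclomatic accounting: $H$ inherits cyclomatic number $r/2 + 1$ from $\tilde H$, and the partition cuts exactly $3r/2$ edges of $H$ (one per super-edge of $\tilde H$, since each must be split into two connected anchored pieces), so $\sum_i (e_H(V_i) - (|V_i| - 1)) = (|E(H)| - 3r/2) - (|V(H)| - r) = 0$; every term being non-negative, each $H[V_i]$ is a tree. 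Part (c) is the converse: applying (a) to each $A_i \in \mathcal{A}'$ forces $k = 1$ with $\vol(A_i) = 2\sqrt{2m}$, hence $|A_i \cap V(H)| = s$; setting $V_i = A_i \cap V(H)$ yields an equitable partition of $V(H)$ with distinct anchors, and connectivity of $H[V_i]$ follows from connectivity of $G[A_i]$ since the $\alpha$ leaves only touch $a_i$.

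The main obstacle is the $k \geq 2$ analysis of part (a), where one must simultaneously rule out equality $f_m(B) = 2\sqrt{2/m}$ for every admissible value of $\partial(B)$; the hypothesis $\alpha > 32|E(H)|^2$ is calibrated precisely so that the anchor-leaves push $\vol(B)$ outside every window of Lemma~\ref{lem.maxf4}. The subcase $B = V'$ with $\partial(B) = 0$ cannot be dispatched by volume dominance from just two anchors and instead relies on the hypothesis $r \geq 4$ from Lemma~\ref{lma:aecp-hard}, explaining why the reduction requires the AECP instance to have at least four parts.
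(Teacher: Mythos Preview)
Your proposal is correct and, for parts (a) and (c), follows essentially the same three-case split on the number of anchors as the paper, invoking Lemma~\ref{lem.maxf4} in each case. The paper is marginally more economical in the $k=0$ case: it uses only $\partial(B)\geq 1$ together with $\vol(B)\leq 2|E(H)|$ and part~1 of Lemma~\ref{lem.maxf4}, rather than pinning down $\partial(B)=2$ as you do. For $k=1$ the paper stops at $\partial_G(B)=4$ and quotes part~4, without writing out the explicit volume formula $\vol(B)=2|B\cap V(H)|+2\alpha+2$ that you derive (and later reuse). For $k\geq 2$ and for $B=V'$ the arguments coincide, including the reliance on $r\geq 4$ that you correctly flag.

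The one genuine difference is in part (b). You reduce $\vol(A_i)=2\sqrt{2m}$ to $e_H(V_i)=s-1$ and then show every $H[V_i]$ is a tree by a \emph{global} cyclomatic count: $H$ has cyclomatic number $r/2+1$, the AECP partition cuts exactly one edge on each of the $3r/2$ super-edges of $\tilde H$, and hence $\sum_i\bigl(e_H(V_i)-(|V_i|-1)\bigr)=0$ with each summand non-negative. The paper instead makes the \emph{local} observation that $G[A_i]$ is a tree directly, because $A_i$ contains exactly one anchor and every cycle of $H'$ (a subdivision of a $3$-regular graph) must pass through at least two branch vertices. The local argument is shorter and is really the same observation that already underlies your $k=1$ volume formula in part (a), so your cyclomatic detour is correct but redundant; on the other hand, your global count makes the tree property fully explicit where the paper merely asserts it.
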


We defer the proof of Claim~\ref{clm:atonce}. For now we assume that Claim~\ref{clm:atonce} holds and that we have $\alpha > 32|E(H)|^2$ and $\sqrt{2m} = s + \alpha + 1$ and prove the theorem holds under these assumptions. 
%\noindent
By Claim~\ref{clm:atonce}(a) and line \eqref{eq.exact}, we %always 
have $$q^*(G) \leq q_0 = 1-\frac{\beta}{m^2}-\frac{2\sqrt{2}(m-\beta)}{m^{3/2}}.$$
Hence in particular $(G,q_0)$ is a yes-instance if and only if there is a partition $\AA'$ of $V'$ such that $\forall A\in \AA'$ $f_m(A)=2\sqrt{2/m}$.

Claim~\ref{clm:atonce}(b), together with line \eqref{eq.exact}, implies that, if $(H,\{a_1,\ldots,a_r\})$ is a yes-instance, then so is $(G,q_0)$.  Converesly, if $(G,q_0)$ is a yes-instance, it follows from Claim~\ref{clm:atonce}(c), 
that $(H,\{a_1,\ldots,a_r\})$ is a yes-instance.

It remains only to show the claim holds and we can choose suitable values of $\alpha$ and $\beta$ to ensure $\alpha > 32|E(H)|^2$ and $\sqrt{2m} = s + \alpha + 1$. Set $\alpha$ to be the least integer such that 
\begin{equation}\label{eq.defalpha}
\alpha\geq 32|E(H)|^2, \;\;\; (\alpha+s+1)^2>2|E(H)|+2\alpha r+r \;\;\; \mbox{and} \;\;\; \alpha=s+1 \;(\mbox{mod } 2). 
\end{equation}
Recall that $r$ is even. This, along with our parity constraint between $\alpha$ and $s$, implies that $(\alpha+s+1)^2-r$ is even. Thus we can choose $\beta$ to be
\begin{equation}\label{eq.defbeta}
\beta = \frac{1}{2}\left( (\alpha+s+1)^2-r \right)- |E(H)|-\alpha r;
\end{equation}
note $\beta$ is positive because we set $\alpha$ so that $(\alpha+s+1)^2>2|E(H)|+2\alpha r+r$. Finally, observe that we do have $\sqrt{2m}=s+\alpha+1$ because, by the chosen value of $\beta$,
\begin{equation*} m= |E(H)|+\alpha r+\beta+r/2 = (s+\alpha+1)^2/2.\end{equation*}

This concludes the proof of the theorem except that we must still establish Claim~\ref{clm:atonce}.

\begin{proof}[Proof of Claim \ref{clm:atonce}(a):] We begin by showing that our two assumptions $\alpha>32|E(G)|^2$ and $\alpha+s+1=\sqrt{2m}$ imply that $\alpha > 0.969\sqrt{2m}$. Recall that $H$ without pendant edges is a subdivision of a cubic graph and so the average degree in $H$ is at least two. Thus $|E(H)|\geq |H|$. Also $r\geq 4$, so $|H|\geq 4s \geq s+1$ so $|E(H)|\geq s+1$. By assumption $\alpha > 32 |E(H)|^2 \geq 32(s+1)$. But also by assumption $\alpha+s+1=\sqrt{2m}$ and so $\alpha\geq 32/33\sqrt{2m} > 0.969\sqrt{2m}$.

We now show that if $f_m(B)\leq 2\sqrt{2/m}$ then $B$ must contain exactly one anchor. First suppose $B$ contains no anchors: then $B$ does not contain nor is $B$ incident to any of the $\alpha$ extra edges added to anchors nor the $r/2$ extra edges in the perfect matching between anchors. Hence the volume of $B$ in~$G$ is at most what it was in $H$, i.e. $\vol_{G}(B)\leq \vol_H(B) \leq 2|E(H)|$. Also note that as $G[V']$ is connected, $\partial(B)\geq 1$, hence by Lemma~\ref{lem.maxf4} it is enough to show that $\vol(B)<0.2679\sqrt{2m}$ and this will show that for $B$ with no anchors, $f_m(B)>2\sqrt{2/m}$. Clearly $m\geq \alpha$ and by assumption $\alpha>32|E(H)|^2$. Hence for $B$ with no anchors:
\[0.2679\sqrt{2m}\geq 0.2679\sqrt{64 |E(H)|^2} = 2.1432 \; E(H) > \vol(B)\]
and so if $f_m(B)\leq 2\sqrt{2/m}$ then $B$ must contain at least one anchor.

If $B$ contains at least two anchors then there are two options: $B=V'$ and $B\subsetneq V'$. We rule out $B=V'$ and $f_m(B)\leq 2\sqrt{2/m}$ first. Note that $\vol(V')=2|E(H)|+2\alpha r+r$. But $r\geq 4$ and by earlier in the proof $\alpha>0.969\sqrt{2m}$. Hence $\vol(V')\geq 8\alpha > 7.752\sqrt{2m}$ and so by Lemma~\ref{lem.maxf4} we get that $f_m(V')>2\sqrt{2/m}$. Thus $B\neq V'$.

Now we show that for $B\subsetneq V'$ with at least two anchors in $B$ we have $f_m(B)>2\sqrt{2/m}$. In the case $B\neq V'$ because $G[V']$ is connected $\partial(B)\geq 1$. If $B$ has at least two anchors then $\vol(B)\geq 4\alpha>3.876\sqrt{2m}$. Therefore for $B\subsetneq V'$ with at least two anchors in $B$, $\partial(B)\geq 1$ and $\vol(B) > 3.878\sqrt{2m}$ hence $f_m(B)>2\sqrt{2/m}$ by Lemma~\ref{lem.maxf4}.

Thus to ensure $f_m(B)\leq 2\sqrt{2/m}$ we must have exactly one anchor in $B$. In particular we can now assume that $B$ contains exactly one anchor. Let graph $G'$ be~$G$ without the added perfect matching between anchors at the end of the construction of~$G$ from $H$. Now $G'[B]$ is connected, $B$ has exactly one anchor and after stripping pendant vertices that anchor has degree $3$ in $G'$ so we have $\partial_{G'}(B)\geq 3$. And because $B$ is pendant-consistent $\partial_{G'}(B)=3$, after re-adding the perfect matching between anchors $\partial_{G}(B)=4$.

But now, because $\partial_G(B)=4$, by Lemma~\ref{lem.maxf4} we have that $f_m(B)\geq 2\sqrt{2/m}$. Also by Lemma~\ref{lem.maxf4} to get equality $f_m(B)=2\sqrt{2/m}$ we must have $\vol(B)=2\sqrt{2m}$ which establishes the last part of the claim.
\hfill $\square$ \textit{Claim \ref{clm:atonce}(a)}
\end{proof}

\begin{proof}[Proof of Claim \ref{clm:atonce}(b):]
Suppose $(H,\{a_1,\ldots,a_r\})$ is a yes-instance.  We prove there exists a vertex partition $\AA'$ of $V'$ such that, for all $A\in \AA'$, $f_m(A)=2\sqrt{2/m}$. By assumption, there is a connected equipartition $\BB=\{B_1,\ldots,B_r\}$ of $V(H)$ such that $a_i \in B_i$ for each $i$. In the construction of the graph~$G$ from $H$ we added $\alpha$ pendant vertices, say $u_1^i, \ldots, u_\alpha^i$, to each anchor~$a_i$. Define $\AA'=\{ B_i \cup \{ u_1^i, \ldots, u_\alpha^i\} \; : \; B_i \in \BB\}$. Observe that that $\AA'$ is a vertex partition of $V'$ as the set $V'$ consists exactly of $V(H)$ together with the extra~$\alpha r$ vertices added with the pendant edges on each anchor. It now remains to prove that $f_m(A_i)=2\sqrt{2/m}$ for each $i$.

Consider $G'$, the graph formed from~$G$ by removing the arbitary perfect matching added in the last step of the construction of~$G$ from $H$. Recall the graph $G'$ is the subdivision of a 3-regular graph with the anchors as the branch vertices. Fix $i$ and note that $H[B_i]$ connected implies that $G'[A_i]$ is connected. But as $G'[A_i]$ is connected, contains exactly one anchor and contains every vertex pendant to a vertex in $A_i$ it must be the case that $\partial_{G'}(A_i)=3$. Now re-add the perfect matching and we get that $\partial_G(A_i)=4$.

It suffices now to show that $\vol(A_i)=2\sqrt{2m}$. To see this, first observe that $G[A_i]$ is a tree, and so $|E_{G}(A_i)|=|A_i|-1$. But by the construction $|A_i|=|B_i|+\alpha=s+\alpha$. Recall the volume of a vertex set is twice the number of internal edges plus the number of edges between the set and the rest of the graph. Thus, because $\partial_{G}(A_i)=4$, we get 
$$\vol(A_i)=2(s+\alpha-1)+4 = 2(s+\alpha+1) = 2\sqrt{2m},$$
which establishes the claim.
\hfill $\square$ \textit{Claim \ref{clm:atonce}(b)}
\end{proof}

\begin{proof}[Proof of Claim \ref{clm:atonce}(c):] Suppose there exists a vertex partition $\AA'=\{A'_1, \ldots, A'_r\}$ of $V'$ such that, for all $A'_i\in \AA'$, $f_m(A'_i)=2\sqrt{2/m}$, $A'_i$ is pendant-consistent, and $G[A_i']$ is connected. By Claim~\ref{clm:atonce}(a) we may also assume that for all $A_i'\in\AA'$ we have $\vol(A_i')=2\sqrt{2m}$. We will show this implies that $(H,\{a_1,\ldots,a_r\})$ is a yes-instance.

Fix some $i$. The induced subgraph $G[A'_i]$ is connected and contains exactly one anchor, say $a_i$, so we can remove the perfect matching between the anchors and $G'[A'_i]$ is still connected. Let $B_i$ be the vertex set obtained from $A_i'$ by removing the $\alpha$ added leaves pendant on the anchor $a_i$. Then $B_i\subseteq V(H)$ and $H[B_i]$ is connected.

It remains only to show that $|B_i|$ is exactly $s=|H|/r$. Since $G(A_i')$ is a tree with volume $2\sqrt{2/m}$ and $\partial_{G}(A_i')=4$, $\vol(A_i')=2(|A_i|-1)+4=2|A_i|+2$. But $|A_i|=|B_i|+\alpha$ and so 
$$|B_i|=|A_i|-\alpha = \vol(A_i')/2-1-\alpha;$$
by design this is precisely $\vol(A_i')/2-1-\alpha=s$ and so we are done.\\
\text{} \hfill $\square$ \textit{Claim \ref{clm:atonce}(c)}
\end{proof}

This completes the proof.\qed
\end{proof}

\section{Conclusions and Open Problems}

We have shown that \textsc{Modularity} belongs to $\FPT$ when parameterised by the vertex cover number of the input graph, and that the problem is solvable in polynomial time on input graphs whose treewidth or max leaf number is bounded by some fixed constant; we also showed that there is an FPT algorithm, parameterised by treewidth, which computes any constant-factor approximation to the maximum modularity.  In contrast with the positive approximation result, we demonstrated that the problem is unlikely to admit an exact FPT algorithm when the treewidth is taken to be the parameter, as it is $\W[1]$-hard even when parameterised simultaneously by the pathwidth and size of a minimum feedback vertex set for the input graph.  

We conjecture that our XP algorithm parameterised by max leaf number is not optimal, and that \textsc{Modularity} in fact belongs to $\FPT$ with respect to this parameterisation.  Another open question arising from our work is whether the problem belongs to $\FPT$ with respect to other parameters for which this is not ruled out by our hardness result, including treedepth, modular width and neighbourhood diversity.

It is also natural to ask whether our approximation result can be extended to larger classes of graphs, for example those of bounded cliquewidth or bounded expansion.  Moreover, when considering treewidth as the parameter, it would be interesting to investigate the existence or otherwise of an $\epsilon$-approximation in time $f(\tw,\epsilon) n^{\mathcal{O}(1)}$.

\paragraph*{Acknowldegements}
The authors are grateful to Jessica Enright for some helpful initial discussions about the topic.

% BibTeX users please use one of
%\bibliographystyle{spbasic}      % basic style, author-year citations
%\bibliographystyle{spmpsci}      % mathematics and physical sciences
%\bibliographystyle{spphys}       % APS-like style for physics
%\bibliography{\modularity_refs}   % name your BibTeX data base

\begin{thebibliography}{10}
\providecommand{\url}[1]{{#1}}
\providecommand{\urlprefix}{URL }
\expandafter\ifx\csname urlstyle\endcsname\relax
  \providecommand{\doi}[1]{DOI~\discretionary{}{}{}#1}\else
  \providecommand{\doi}{DOI~\discretionary{}{}{}\begingroup
  \urlstyle{rm}\Url}\fi

\bibitem{bagrow}
Bagrow, J.P.: Communities and bottlenecks: Trees and treelike networks have
  high modularity.
\newblock Physical Review E \textbf{85}(6), 066118 (2012)

\bibitem{louvain}
Blondel, V.D., Guillaume, J.L., Lambiotte, R., Lefebvre, E.: Fast unfolding of
  communities in large networks.
\newblock Journal of Statistical Mechanics: Theory and Experiment
  \textbf{2008}(10), P10008 (2008)

\bibitem{bodlaender93}
Bodlaender, H.L.: A linear time algorithm for finding tree-decompositions of
  small treewidth.
\newblock In: Proceedings of the Twenty-fifth Annual ACM Symposium on Theory of
  Computing, STOC '93, pp. 226--234. ACM, New York, NY, USA (1993).
\newblock \doi{10.1145/167088.167161}

\bibitem{brandes08}
Brandes, U., Delling, D., Gaertler, M., Gorke, R., Hoefer, M., Nikoloski, Z.,
  Wagner, D.: On modularity clustering.
\newblock IEEE Trans. on Knowl. and Data Eng. \textbf{20}(2), 172--188 (2008).
\newblock \doi{10.1109/TKDE.2007.190689}

\bibitem{paramalgs}
Cygan, M., Fomin, F.V., Kowalik, {\defaultL}., Lokshtanov, D., Marx, D.,
  Pilipczuk, M., Pilipczuk, M., Saurabh, S.: Parameterized Algorithms.
\newblock Springer, Cham (2015)

\bibitem{dasgupta13}
DasGupta, B., Desai, D.: On the complexity of newman's community finding
  approach for biological and social networks.
\newblock Journal of Computer and System Sciences \textbf{79}(1), 50 -- 67
  (2013).
\newblock \doi{http://dx.doi.org/10.1016/j.jcss.2012.04.003}

\bibitem{modgraphclasses}
De~Montgolfier, F., Soto, M., Viennot, L.: Asymptotic modularity of some graph
  classes.
\newblock In: Algorithms and Computation, pp. 435--444. Springer (2011)

\bibitem{dinh15}
Dinh, T.N., Li, X., Thai, M.T.: Network clustering via maximizing modularity:
  Approximation algorithms and theoretical limits.
\newblock In: 2015 IEEE International Conference on Data Mining, pp. 101--110
  (2015).
\newblock \doi{10.1109/ICDM.2015.139}

\bibitem{dinh2011finding}
Dinh, T.N., Thai, M.T.: Finding community structure with performance guarantees
  in scale-free networks.
\newblock In: Privacy, Security, Risk and Trust (PASSAT) and 2011 IEEE Third
  Inernational Conference on Social Computing (SocialCom), 2011 IEEE Third
  International Conference on, pp. 888--891. IEEE (2011)

\bibitem{dinh13}
Dinh, T.N., Thai, M.T.: Community detection in scale-free networks:
  Approximation algorithms for maximizing modularity.
\newblock IEEE Journal on Selected Areas in Communications \textbf{31}(6),
  997--1006 (2013).
\newblock \doi{10.1109/JSAC.2013.130602}

\bibitem{downeyfellows13}
Downey, R.G., Fellows, M.R.: Fundamentals of Parameterized Complexity.
\newblock Springer London (2013)

\bibitem{enciso09}
Enciso, R., Fellows, M.R., Guo, J., Kanj, I., Rosamond, F., Such{\'y}, O.: What
  makes equitable connected partition easy.
\newblock In: J.~Chen, F.V. Fomin (eds.) Parameterized and Exact Computation:
  4th International Workshop, IWPEC 2009, Copenhagen, Denmark, September 10-11,
  2009, Revised Selected Papers, pp. 122--133. Springer Berlin Heidelberg,
  Berlin, Heidelberg (2009).
\newblock \doi{10.1007/978-3-642-11269-0{\_}10}

\bibitem{estivill05}
Estivill-Castro, V., Fellows, M., Langston, M., Rosamond, F.: {FPT} is {P}-time
  extremal structure {I}.
\newblock In: Algorithms and Complexity in Durham 2005, Proceedings of the
  first ACiD Workshop, volume 4 of Texts in Algorithmics, pp. 1--41. King’s
  College Publications (2005)

\bibitem{fortunato2010community}
Fortunato, S.: Community detection in graphs.
\newblock Physics Reports \textbf{486}(3), 75--174 (2010)

\bibitem{fortunato2016community}
Fortunato, S., Hric, D.: Community detection in networks: A user guide.
\newblock Physics Reports \textbf{659}, 1--44 (2016)

\bibitem{jeubcode}
Jutla, I.S., Jeub, L.G.S., Mucha, P.J.: A generalized {L}ouvain method for
  community detection implemented in {MATLAB}.
\newblock URL http://netwiki.amath.unc.edu/GenLouvain  (2011)

\bibitem{kawase16}
Kawase, Y., Matsui, T., Miyauchi, A.: {Additive Approximation Algorithms for
  Modularity Maximization}.
\newblock In: S.H. Hong (ed.) 27th International Symposium on Algorithms and
  Computation (ISAAC 2016), \emph{Leibniz International Proceedings in
  Informatics (LIPIcs)}, vol.~64, pp. 43:1--43:13. Schloss
  Dagstuhl--Leibniz-Zentrum fuer Informatik, Dagstuhl, Germany (2016).
\newblock \doi{10.4230/LIPIcs.ISAAC.2016.43}

\bibitem{kloks94}
Kloks, T.: Treewidth - Computations and Approximations, \emph{Lecture Notes in Computer Science}, vol. 842.
\newblock Springer-Verlag, Berlin (1994).
\newblock \doi{10.1007/BFb0045375}.

\bibitem{popular}
Lancichinetti, A., Fortunato, S.: Limits of modularity maximization in
  community detection.
\newblock Physical Review E \textbf{84}(6), 066122 (2011)

\bibitem{lokshtanov15}
Lokshtanov, D.: Parameterized integer quadratic programming: Variables and
  coefficients.
\newblock arXiv:1511.00310 [cs.DS] (2015)

\bibitem{treelike}
Mc{D}iarmid, C., Skerman, F.: Modularity of regular and treelike graphs.
\newblock Journal of Complex Networks \textbf{5} (2017)

\bibitem{ERus}
Mc{D}iarmid, C., Skerman, F.: Modularity of {Erd{\H{o}}s}-{R}\'enyi random
  graphs.
\newblock Random Structures and Algorithms, to appear.

\bibitem{modERAofA}
Mc{D}iarmid, C., Skerman, F.: Modularity of {E}rd{\H{o}}s-{R}{\'e}nyi random
  graphs.
\newblock In: 29th International Conference on Probabilistic, Combinatorial and
  Asymptotic Methods for the Analysis of Algorithms, vol.~1 (2018)

\bibitem{NewmanGirvan}
Newman, M.E.J., Girvan, M.: Finding and evaluating community structure in
  networks.
\newblock Physical Review E \textbf{69}(2), 026113 (2004)

\bibitem{porter2009communities}
Porter, M., Onnela, J.P., Mucha, P.: Communities in networks.
\newblock Notices of the AMS \textbf{56}(9), 1082--1097 (2009)

\bibitem{pralat}
Prokhorenkova, L.O., Pra{\l}at, P., Raigorodskii, A.: Modularity of models of
  complex networks.
\newblock Electronic Notes in Discrete Mathematics \textbf{61}, 947--953 (2017)

\bibitem{thesis}
Skerman, F.: Modularity of networks.
\newblock Ph.D. thesis, University of Oxford (2016)

\end{thebibliography}

%% Non-BibTeX users please use

\end{document}